\newcommand{\theGenerationdDate}%
 {This file was generated on \today\ (by \texCompilerName{}
  with format \fmtname\ \fmtversion).}
\newcommand{\thePaperInfo}%
 {Version~\exactProjectVersion, no.~\releaseNumber.}
\newtheorem{theorem}{Theorem}[section]
\newtheorem{lemma}[theorem]{Lemma}
\newtheorem{proposition}[theorem]{Proposition}
\newtheorem{corollary}[theorem]{Corollary}
\newtheorem{definition}[theorem]{Definition}
\newtheorem{example}[theorem]{Example}
\newenvironment{proofof}[2][Proof of]%
 {\begin{proof}[#1 {#2}]}{\end{proof}}
\newcommand{\unitsof}[1]{{#1^{\times}}}
\newcommand{\field}[1]{\mathbb{F}_{#1}}
\newcommand{\nzfield}[1]{\unitsof{\field{#1}}}
\newcommand{\symgroup}[1]{\mathrm{S}_{#1}}
\newcommand{\mmgroup}{\mathfrak{M}}
\newcommand{\vt}[1]{\mathbf{#1}}
\newcommand{\mat}[1]{\mathbf{#1}}
\newcommand{\transpose}[1]{#1^{\mathsf{T}}}
\newcommand{\idm}{\mat{I}}
\DeclareMathOperator{\hd}{d_H}
\DeclareMathOperator{\wt}{wt}
\DeclareMathOperator{\wtd}{A}
\DeclareMathOperator{\wte}{W}
\DeclareMathOperator{\mmap}{m}
\DeclareMathOperator{\id}{id}
\DeclareMathOperator{\he}{H}
\newcommand{\floor}[1]{\left\lfloor{#1}\right\rfloor}
\newcommand{\ceil}[1]{\left\lceil{#1}\right\rceil}
\DeclareMathOperator{\ed}{d_E}
\DeclareMathOperator{\ew}{h}
\DeclareMathOperator{\ef}{h}
\newcommand{\eqdef}{:=}
\newcommand{\theTitle}{Entropy Distance}
\title{\theTitle%
 \thanks{\theGenerationdDate}
 \thanks{\thePaperInfo}
}
\author{Shengtian Yang%
 \thanks{S. Yang resides at Zhengyuan Xiaoqu 10-2-101,
  Hangzhou 310011, China (email: yangst@codlab.net).}
}
\date{}
\begin{document}

\maketitle

\begin{abstract}
Motivated by the approach of random linear codes,
a new distance in the vector space over a finite field is defined
as the logarithm of the ``surface area'' of a Hamming ball
with radius being the corresponding Hamming distance.
It is named entropy distance because of
its close relation with entropy function.
It is shown that entropy distance is a metric for a non-binary field
and a pseudometric for the binary field.
The entropy distance of a linear code is defined to be the smallest
entropy distance between distinct codewords of the code.
Analogues of the Gilbert bound, the Hamming bound,
and the Singleton bound are derived for the largest size
of a linear code given the length and entropy distance of the code.
Furthermore, as an important property related to lossless joint
source-channel coding, the entropy distance of a linear encoder
is defined.
Very tight upper and lower bounds are obtained for
the largest entropy distance of a linear encoder
with given dimensions of input and output vector spaces.
\end{abstract}

\begin{quote}
\small\textbf{Keywords:}
Channel coding, entropy distance, entropy weight, Hamming distance,
joint source-channel coding, linear code, linear encoder,
sphere packing.
\end{quote}

\section{Introduction}

The aim of channel coding theory is to find effective ways of
combating noise so that information can be transmitted reliably
and quickly. One of the most important topics in this field
is about linear codes with large minimum distance,
because large minimum distance implies good error-correcting
capability (see e.g., \cite{Huffman200300, Lin200400}).

Let $\field{q}$ be a finite field of order $q = p^r$,
where $p$ is prime and $r \ge 1$.
The vector space of all $n$-tuples over $\field{q}$
is denoted by $\field{q}^n$. We usually write
a vector in $\field{q}^n$ in the row-vector form
$\vt{x} = (x_1, x_2, \ldots, x_n)$, and for $c \in \field{q}$
we denote by $\vt{c}$ the all-$c$ vector in $\field{q}^n$.
The \emph{(Hamming) distance} $\hd(\vt{x}, \vt{y})$ between
$\vt{x}, \vt{y} \in \field{q}^n$ is defined to be the number of
coordinates in which $\vt{x}$ and $\vt{y}$ differ.
In particular, we define the \emph{(Hamming) weight} $\wt(\vt{x})$ of
$\vt{x} \in \field{q}^n$ as $\hd(\vt{x}, \vt{0})$.
An $[n, k]$ \emph{linear code} $C$ over
$\field{q}$ is a $k$-dimensional subspace of $\field{q}^n$,
and a vector in $C$ is called a codeword of $C$.
The \emph{(minimum) distance} of $C$ is defined to be
the minimum of distances between distinct codewords of $C$,
or equivalently, the minimum weight of nonzero codewords of $C$.
Then an $[n, k]$ linear code with distance $d$ is usually
denoted as an $[n, k, d]$ linear code.

The significance of minimum distance is related with a classical
channel model called binary symmetric channel (BSC). Over a BSC,
the optimum decoding rule is to decode to the codeword closest
(in Hamming distance) to the received $n$-tuple, so a linear code
with distance $d$ can correct $(d-1)/2$ or fewer channel errors.
Note that the amount of information that
a linear code carries is characterized by its dimension $k$ or the
rate $k/n$, so one goal of coding theory is to determine the largest
rate of a linear code with a given distance (or the largest
distance of a linear code with a given rate).
There are countless papers on this topic (including nonlinear codes),
but so far, there is still a large gap between the best known
asymptotic lower bound and asymptotic upper bound on the rate of codes
(see e.g.,
\cite{Huffman200300, Gilbert195205, Varshamov195700, Tsfasman198200,
 McEliece197703, Jiang200408, Vu200509, Xing200101, Xing201109}
and the references therein).

This is a strange phenomenon, because on the channel coding problem,
information theory has provided very tight asymptotic lower and upper
bounds which in fact coincide at the point called channel capacity
(see e.g., \cite{Cover199100}). This implies that the coding problem
based on the distance of linear codes has diverged from its original
motivation for reliable transmission in the sense of information
theory. On the other hand, we note that the approach of random linear
codes (usually using a uniformly distributed random matrix) is
frequently used in theory to construct capacity-approaching coding
schemes or linear codes with large distance (see e.g.,
\cite{Elias195503, Gallager196300, Barg200209, Yang200904}
and the references therein).
If the approach of random linear codes is in the correct direction,
at least the author believes so, then do we need to rethink of
the distance of a linear code? Is it a good criterion of
error-correcting capability? Or can we learn something
valuable from the random-linear-code approach?

These questions motivate this paper, which will present a new
distance of (linear) code called entropy distance.
Roughly speaking, the entropy distance between
$\vt{x},\vt{y}\in\field{q}^n$ is defined as the logarithm of
the ``surface area'' of a sphere with radius $\hd(\vt{x},\vt{y})$,
and the entropy distance of a linear code is defined
in a similar way to Hamming distance.
A linear code with large entropy distance must have large (Hamming)
distance, but not vice versa.
Furthermore, we shall define the entropy distance of a linear encoder,
an interesting property related to lossless joint source-channel
coding.
Several lower and upper bounds about entropy distance of
linear codes and linear encoders are derived,
and concrete examples with large entropy distance are also provided.
In the case of linear encoders,
the lower and upper bounds on entropy distance turn out to be very tight.

The rest of this paper is organized as follows.
In Section~\ref{sec:M&D}, we revisit the sphere packing problem
in an information-theoretic manner
(by the approach of random linear codes).
In this process, we propose a sufficient condition
(called ``white'' condition) for universal packing.
To some extent, the minimum (Hamming) distance of a linear code
is a simplification of the ``white'' condition.
As another simplification, entropy distance is defined.
In Section~\ref{sec:LC.ED},
we investigate the properties of entropy distance of linear codes.
A lower bound and two upper bounds on the largest size of
a linear code with a given entropy distance are derived.
In Section~\ref{sec:LE.ED}, we goes further to define and study
the entropy distance of a linear encoder.
An upper bound and a lower bound on the largest entropy distance
of a linear encoder are derived
in terms of the dimensions of input and output vector spaces.
Concluding remarks are given in Section~\ref{sec:Conclusion}.

In the sequel, the multiplicative subgroup of nonzero elements of $\field{q}$ is denoted by $\nzfield{q}$.
The group of all permutations of the set $\{1, 2, \ldots, n\}$
is denoted by $\symgroup{n}$.
Each $\sigma \in \symgroup{n}$ together with each
$\vt{v}\in\nzfield{q}^n$ induces a \emph{monomial map}
$\mmap_{\sigma,\vt{v}}: \field{q}^n \to \field{q}^n$ given by
$\vt{x} \to (v_1x_{\sigma^{-1}(1)}, \ldots, v_nx_{\sigma^{-1}(n)})$.
In particular, $\mmap_{\sigma,\vt{1}}$ is called
\emph{coordinate permutation}
and is also denoted $\sigma$ for convenience.
The set of all monomial maps of $\field{q}^n$
is denoted by $\mmgroup(\field{q}^n)$.

For convenience of notation, we define
$aA\eqdef\{a\vt{x}: \vt{x}\in A\}$ and
$\vt{v}+A=A+\vt{v}\eqdef\{\vt{v}+\vt{x}: \vt{x}\in A\}$
for $a\in\field{q}$, $\vt{v}\in\field{q}^n$,
and $A\subseteq\field{q}^n$.

An $m$-by-$n$ matrix over a field is written as
$\mat{M}=(M_{i,j})_{m\times n}$
where $M_{i,j}$ denotes the $(i,j)$th entry.
The transpose of $\mat{M}$ is denoted by $\transpose{\mat{M}}$.
The $n\times n$ identity matrix is denoted $\idm_n$.

The \emph{identity function} on a set $A$ is denoted $\id_A:A\to A$
(given by $x\mapsto x$).
For a subset $B$ of $A$, the \emph{indicator function}
$1_B: A \to \{0,1\}$ is given by $x\mapsto 1$ for $x\in B$ and
$x\mapsto 0$ for $x\not\in B$. When the expression of $B$ is long,
we write $1B$ in place of $1_B(x)$.

For $x\in [0,1]$, we define the \emph{Hilbert entropy function} by
\[
\he_q(x) \eqdef x\log_q(q-1)-x\log_qx-(1-x)\log_q(1-x)
\]
with the convention $0\log_q0 = 0$. By $\he_q^{-1}$ we mean the inverse
of $\he_q$ from $[0, 1]$ to $[0, 1-q^{-1}]$.

The \emph{floor function} $\floor{x}$ and \emph{ceiling function}
$\ceil{x}$ of a real number $x$ are defined to be
the largest integer not greater than $x$
and the smallest integer not less than $x$, respectively.

Following the usual convention,
we always mean Hamming distance when we say distance, and entropy
distance should always be stated explicitly.

\section{Motivation and Definition}
\label{sec:M&D}

The essential of channel coding is related with a concept called
sphere packing. To some extent, it corresponds to the partition
induced by an optimal channel decoder.
Let $g: \field{q}^n \to \field{q}^k$ be a decoder,
and then the partition
$\{g^{-1}(\vt{x}): \vt{x} \in \field{q}^k\}$ of $\field{q}^n$
may be regarded as some kind of ``sphere'' packing. However,
the balls here are generally irregular and heterogeneous, or
should not be called ball at all.

In coding theory, we usually consider the packing problem of balls
in Hamming distance.
In the space $\field{q}^n$, a \emph{sphere} with
center $\vt{x} \in \field{q}^n$ and integer radius $r$ is the set
of all vectors which are all the same distance $r$ from $\vt{x}$.
The sphere together with its interior is called a \emph{ball}, i.e.,
the set $\{\vt{x}'\in\field{q}^n: \hd(\vt{x}', \vt{x}) \le r\}$.
The problem of finding the largest rate of codes with distance
$d$ is equivalent to the problem of finding the maximum number of
balls of radius $d/2$ that can be packed into the space $\field{q}^n$.
Obviously, this kind of balls is so regular that a large proportion
of the space is wasted in a general case.
It is by no means the kind of sphere packing
that information theory expects.

If we think in a manner more analogous to information
theory, for example, we may allow the ball contain some holes
as long as the total volume of holes is negligible in a certain sense,
then the situation changes drastically. By the approach of
random linear codes, we shall show that in this new sense,
there are linear codes whose sphere-packing radius is almost as
high as their distance, and that this kind of linear codes is
characterized by a weight distribution
that has almost the same shape as the function $\binom{n}{i}(q-1)^i$,
the ``surface area'' of a sphere with radius $i$.
For an $[n,k]$ linear code $C$ over $\field{q}$,
its weight distribution is a vector
$(\wtd_0(C), \ldots, \wtd_i(C), \ldots, \wtd_n(C))$
where $\wtd_i(C)$ is the number of codewords of weight $i$ in $C$.
The next two propositions conclude the existence of such a linear code.

\begin{proposition}[cf. \cite{Gallager196300, Barg200209}]
\label{pr:WDofRLC}
For $n \ge k \ge 1$, there is an $[n, k]$ linear code $C$ such that
\begin{equation}
\wtd_i(C) < n q^{-(n-k)} \binom{n}{i}(q-1)^i \qquad
\forall i = 1, 2, \ldots, n.
\label{eq:WDofRLC}
\end{equation}
\end{proposition}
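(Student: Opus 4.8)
The plan is a first-moment (random-coding) argument over the ensemble of \emph{uniformly random $k$-dimensional subspaces} $C$ of $\field{q}^n$; this ensemble is convenient because every realization is automatically an $[n,k]$ linear code, so I never have to worry about the dimension dropping below $k$. The one structural fact I need is that for a fixed nonzero $\vt{x}\in\field{q}^n$ the probability $\Pr[\vt{x}\in C]$ is the same for every such $\vt{x}$: the group $\mathrm{GL}_n(\field{q})$ acts transitively on nonzero vectors and permutes the $k$-subspaces while preserving the uniform law, so symmetry forces this probability to be constant. Counting the $q^k-1$ nonzero vectors that a fixed $C$ contains against the $q^n-1$ nonzero vectors of the whole space then pins the common value down to $(q^k-1)/(q^n-1)$.

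With that in hand I would compute the expected weight distribution by linearity of expectation. There are exactly $\binom{n}{i}(q-1)^i$ vectors of weight $i$, each lying in $C$ with probability $(q^k-1)/(q^n-1)$, so
\[
\mathbb{E}[\wtd_i(C)]=\binom{n}{i}(q-1)^i\,\frac{q^k-1}{q^n-1}.
\]
The crucial comparison is $(q^k-1)/(q^n-1)$ against $q^{-(n-k)}=q^k/q^n$: cross-multiplying reduces this to comparing $q^n$ with $q^k$, so for $n>k$ one gets $(q^k-1)/(q^n-1)<q^{-(n-k)}$ \emph{strictly}. Writing $\mu_i\eqdef q^{-(n-k)}\binom{n}{i}(q-1)^i$, this means $\mathbb{E}[\wtd_i(C)]<\mu_i$ for every $i=1,\dots,n$.

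Now I would turn the expectation bound into the existence of a single good code via a Markov-plus-union-bound step. For each index, $\Pr[\wtd_i(C)\ge n\mu_i]\le\mathbb{E}[\wtd_i(C)]/(n\mu_i)<1/n$, the final inequality being strict exactly because $\mathbb{E}[\wtd_i(C)]<\mu_i$. Summing the $n$ estimates gives $\Pr[\exists\,i:\wtd_i(C)\ge n\mu_i]<\sum_{i=1}^{n}(1/n)=1$, so some realization of $C$ satisfies $\wtd_i(C)<n\mu_i=nq^{-(n-k)}\binom{n}{i}(q-1)^i$ simultaneously for all $i$, which is the claimed code. The remaining case $n=k$ is immediate: the only $[n,n]$ code is $\field{q}^n$ itself, for which $\wtd_i=\binom{n}{i}(q-1)^i=\mu_i<n\mu_i$ once $n\ge2$.

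The step I expect to be the real obstacle is not any single computation but the \emph{simultaneity}: the proposition demands the bound for all $n$ weight indices at once, with the tight constant $n$ out front. The union bound therefore closes only by a hair---exactly $n$ events, each of Markov-probability at most $1/n$---so the argument collapses unless every per-index inequality is made genuinely strict. Securing that uniform strictness is what dictates the choice of ensemble: the random-subspace model both fixes the dimension for free and, through $q^n>q^k$, delivers $\mathbb{E}[\wtd_i(C)]<\mu_i$ for \emph{all} $i$ at once, whereas an ensemble giving $\mathbb{E}[\wtd_i(C)]=\mu_i$ for some $i$ would leave the simultaneous bound in doubt.
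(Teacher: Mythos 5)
Your proof is correct and is essentially the paper's own argument: the same first-moment computation giving $\mathbb{E}[\wtd_i(C)]=\binom{n}{i}(q-1)^i\,(q^k-1)/(q^n-1)<q^{-(n-k)}\binom{n}{i}(q-1)^i$ for $n>k$, followed by the same expenditure of the factor $n$ on a union over the $n$ weight classes (the paper phrases your Markov-plus-union step as a pigeonhole over its finite ensemble, which is the identical estimate). The only difference is the choice of ensemble --- you average over uniformly random $k$-dimensional subspaces via $\mathrm{GL}_n(\field{q})$-symmetry, while the paper averages over the fixed-dimension family $C_v=f_v(g(\field{q}^k))$, $v\in\unitsof{\field{q^n}}$, given by multiplication in the extension field --- but the two ensembles have identical inclusion statistics for nonzero vectors, so the proofs coincide step for step (and your explicit ``once $n\ge 2$'' caveat for the $n=k$ case is, if anything, slightly more careful than the paper, since at $n=k=1$ the strict inequality in the statement actually fails).
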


\begin{proposition}
\label{pr:UPacking}
Let $C$ be an $[n,k]$ linear code satisfying \eqref{eq:WDofRLC}
and $S$ a subset of $\field{q}^n$ containing $\vt{0}$.
If $|S|< q^{n-k}/(2n)$, then there exist $f\in\mmgroup(\field{q}^n)$
and $B=B(f)\subseteq S$ such that
\begin{enumerate}%
\renewcommand{\labelenumi}{\textnormal{(\arabic{enumi})}}%
\renewcommand{\theenumi}{(\arabic{enumi})}
\item $\vt{0} \in B$,\label{enu:UPacking.1}
\item $|B| > |S|(1 - 2nq^{-(n-k)}|S|)$,\label{enu:UPacking.2}
\item The family $\{B_{\vt{c}}\}_{\vt{c} \in C}$ of sets
$B_{\vt{c}} \eqdef \vt{c} + f(B)$
is pairwise disjoint.\label{enu:UPacking.3}
\end{enumerate}
In particular, if $S$ is invariant under any monomial map and
$|S|< q^{n-k}/n$, then there exists $B\subseteq S$ such that
\begin{enumerate}%
\renewcommand{\labelenumi}{\textnormal{(\arabic{enumi})\ensuremath{'}}}
\renewcommand{\theenumi}{(\arabic{enumi})\ensuremath{'}}
\item $S_0 \eqdef
 \{\vt{s}\in S: \binom{n}{\wt(\vt{s})} (q-1)^{\wt(\vt{s})}
  \le q^{n-k}/(n|S|)\} \subseteq B$,\label{enu:UPacking.1s}
\item $|B| > |S|[1 - nq^{-(n-k)}(|S|-|S_0|)]$,\label{enu:UPacking.2s}
\item The family $\{B_{\vt{c}}\}_{\vt{c} \in C}$ of sets
$B_{\vt{c}} \eqdef \vt{c} + B$
is pairwise disjoint.\label{enu:UPacking.3s}
\end{enumerate}
\end{proposition}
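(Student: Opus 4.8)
The plan is to translate the disjointness condition \ref{enu:UPacking.3} into a difference-set condition and then solve the resulting packing problem by the probabilistic method over random monomial maps. Since $C$ is a subspace and every $f\in\mmgroup(\field{q}^n)$ is a linear bijection, the translates $\{\vt{c}+f(B)\}_{\vt{c}\in C}$ are pairwise disjoint exactly when $f(B)-f(B)$ meets $C$ only at $\vt{0}$, i.e.\ when $B-B$ contains no nonzero vector of the rotated code $f^{-1}(C)$. Call a pair $\{\vt{s},\vt{s}'\}\subseteq S$ \emph{conflicting} for $f$ when $f(\vt{s}-\vt{s}')\in C\setminus\{\vt{0}\}$; the task is then to find a large conflict-free $B\subseteq S$ containing $\vt{0}$. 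Keeping $\vt{0}$ is free, since $f(\vt{0})=\vt{0}$ for every monomial map.

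The probabilistic engine is the observation that a uniformly random $f\in\mmgroup(\field{q}^n)$ sends any fixed vector of weight $w\ge1$ to a uniformly distributed vector of weight $w$, because the monomial group acts transitively on each weight class with stabilizers of equal size. Consequently a conflicting pair has $\vt{s}-\vt{s}'$ of some weight $w=\hd(\vt{s},\vt{s}')\ge1$, and
\[
\Pr\bigl[f(\vt{s}-\vt{s}')\in C\setminus\{\vt{0}\}\bigr]
 =\frac{\wtd_w(C)}{\binom{n}{w}(q-1)^w}<nq^{-(n-k)},
\]
the inequality being precisely \eqref{eq:WDofRLC}. Summing over the fewer than $|S|^2$ ordered pairs bounds the expected number of conflicts by $|S|^2nq^{-(n-k)}$, so some fixed $f$ attains at most this many. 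For that $f$ I extract $B$ greedily: while a conflict remains, delete one of its two members, always one that is not $\vt{0}$. Every deletion kills at least one conflict, so fewer than $|S|^2nq^{-(n-k)}$ members are discarded, comfortably within \ref{enu:UPacking.2}; this yields \ref{enu:UPacking.1} and \ref{enu:UPacking.3}, and the hypothesis $|S|<q^{n-k}/(2n)$ is what keeps the stated bound positive.

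For the refined conclusion I use that $S$—and hence the weight-defined set $S_0$—is invariant under $\mmgroup(\field{q}^n)$, which lets me absorb the monomial map into the statement: running the construction and then replacing $B$ by $f(B)\subseteq f(S)=S$ removes the explicit $f$, so $\{\vt{c}+B\}$ is itself a packing. The sharper count rests on $S_0$ being conflict-free, for then every conflict has an endpoint in $S\setminus S_0$, all deletions can be charged there, $S_0\subseteq B$ is retained, and at most $nq^{-(n-k)}|S|(|S|-|S_0|)$ elements are lost, which is \ref{enu:UPacking.2s}. Establishing conflict-freeness of $S_0$ is the crux, and I expect it to be the main obstacle; I handle it by a dichotomy. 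A nonzero $\vt{s}\in S_0$ has weight $w$ with $\binom{n}{w}(q-1)^w\le q^{n-k}/(n|S|)$, so $\wtd_w(C)<1/|S|\le1$ and thus $\wtd_w(C)=0$. If $|S|^2<q^{n-k}/n$, then the expected total number of conflicts is below $|S|^2nq^{-(n-k)}<1$, so some $f$ has none and $B=S$ serves directly. If $|S|^2\ge q^{n-k}/n$, a conflict within $S_0$ would yield a codeword of weight at most $2w_0$, where $w_0$ is the largest weight occurring in $S_0$; but the elementary inequality $\binom{n}{2w_0}\le\binom{n}{w_0}^2$ gives $\binom{n}{2w_0}(q-1)^{2w_0}\le(q^{n-k}/(n|S|))^2\le q^{n-k}/n$, and the vanishing of $\wtd$ displayed above then rules such a codeword out. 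This dichotomy is the point of the argument: it arranges that the two demands—leaving $S_0$ wholly inside $B$ and keeping the number of deletions small—are never imposed on the same random map, the small-$|S|$ regime collapsing to $B=S$ and the large-$|S|$ regime making $S_0$ conflict-free for every $f$.
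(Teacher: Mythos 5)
Your unprimed half is sound and is essentially the paper's own argument: both average over a uniformly random monomial map, using that $f$ carries a fixed nonzero vector to a uniformly distributed vector of the same weight, so that an ordered pair of $S$ conflicts with probability $\wtd_w(C)/[\binom{n}{w}(q-1)^w]<nq^{-(n-k)}$ by \eqref{eq:WDofRLC}. Your greedy deletion that protects $\vt{0}$ is even a little cleaner than the paper's selection step (the paper combines a Markov bound on the number of bad elements with $\Pr[\Phi_f(\vt{0})=1]<1/2$, which is where the hypothesis $|S|<q^{n-k}/(2n)$ and the factor $2$ in condition \ref{enu:UPacking.2} enter), and it yields a slightly stronger count.

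The primed half, however, has a genuine gap at exactly the point you identify as the crux: your Case-B proof that $S_0$ is conflict-free does not work. The set $S_0$ is cut out by smallness of the sphere size $\binom{n}{w}(q-1)^w$, and this function is \emph{unimodal} in $w$, not increasing, so $S_0$ is not a collection of small-weight classes and $w_0$ can be as large as $n$. Concretely, take $q=2$ and $S$ the union of the weight classes $\{0,1,n\}$ with $n^2(n+2)\le 2^{n-k}\le n(n+2)^2$: then $S$ is monomial-invariant, $|S|=n+2<2^{n-k}/n$, the weight-$1$ and weight-$n$ classes both lie in $S_0$, and $|S|^2\ge 2^{n-k}/n$ puts you in Case B with $w_0=n$. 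Now $\binom{n}{2w_0}=0$, your displayed inequality is vacuous, and the conflicts that must be excluded — between a weight-one vector and $\vt{1}$ — have difference weight $n-1$, a weight about which you have proved nothing: you kill $\wtd_w(C)$ only for $w$ occurring in $S_0$ and for $w=2w_0$, and non-monotonicity of $\binom{n}{w}(q-1)^w$ leaves all intermediate difference weights uncontrolled. The fact you need is true, and the paper already contains the missing tool, namely property \ref{enu:ED.property.TI} of Proposition~\ref{pr:ED.property}: for distinct $\vt{s},\vt{s}'\in S_0$ with $w=\wt(\vt{s}-\vt{s}')\ge 1$,
\[
\binom{n}{w}(q-1)^w = q^{\ew(\vt{s}-\vt{s}')}
\le q^{\ew(\vt{s})+\ew(\vt{s}')}
\le \left(\frac{q^{n-k}}{n|S|}\right)^{2}
\le \frac{q^{n-k}}{n}
\]
in your regime $|S|^2\ge q^{n-k}/n$, whence $\wtd_w(C)<nq^{-(n-k)}\cdot q^{n-k}/n=1$ and so $\wtd_w(C)=0$; substituting this pointwise bound for your $w\le 2w_0$ plus $\binom{n}{2w_0}\le\binom{n}{w_0}^2$ chain closes the dichotomy (in the counterexample above it correctly forces $\wtd_{n-1}(C)=0$). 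For comparison, the paper needs no dichotomy and no greedy deletion in this part: writing $\Phi_{\id}(\vt{s})$ for the indicator that $\vt{s}\in\vt{c}+S$ for some $\vt{c}\in C\setminus\{\vt{0}\}$, invariance of $S$ converts the average over random $f$ into the average of $\Phi_{\id}$ over the entire weight class of $\vt{s}$; for a class inside $S_0$ the integer-valued class total is then $<\binom{n}{w}(q-1)^w\cdot nq^{-(n-k)}|S|\le 1$, hence zero, so the identity map with $B=\Phi_{\id}^{-1}(0)$ delivers \ref{enu:UPacking.1s}--\ref{enu:UPacking.3s} in one stroke.
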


Proposition~\ref{pr:UPacking} is more general than what we need,
so let us give some explanation.

It follows from Proposition~\ref{pr:WDofRLC} that
\[
\hd(C) \ge \min \left\{i:
\binom{n}{i} (q-1)^i \ge \frac{1}{n}q^{n-k}\right\}.
\]
Because $\binom{n}{i} (q-1)^i \le q^{n\he_q(i/n)}$
(see Lemma~\ref{le:Binom.Ineq}), we have
\begin{equation}
\frac{\hd(C)}{n} \ge \delta \eqdef
\he_q^{-1}\left(1 - \frac{k}{n} - \frac{\log_q n}{n}\right)
\in (0, 1-q^{-1}),\label{eq:delta}
\end{equation}
which is the well-known fact that random linear codes
achieve the asymptotic Gilbert-Varshamov (GV) bound
\cite{Huffman200300, Gilbert195205, Varshamov195700, Barg200209}.

Let $S$ be a ball in $\field{q}^n$ with
center $\vt{0}$ and radius $r = \floor{\delta n-\epsilon(n)}$.
The size of $S$ is
\[
\sum_{i=0}^r \binom{n}{i} (q-1)^i
< q^{n\he_q(r/n)} \le q^{n\he_q(\delta-\epsilon(n)/n)}
\]
by Lemma~\ref{le:Binom.Ineq}. Because $\he_q(x)$ is concave,
\begin{eqnarray*}[rcl]
\he_q(x) &\le &\he_q(x_0) + \he_q'(x_0) (x-x_0)\\
&= &\he_q(x_0) + (x-x_0) \log_q\frac{(q-1)(1-x_0)}{x_0},
\end{eqnarray*}
so that
\[
q^{n\he_q(\delta-\epsilon(n)/n)}
\le q^{n\he_q(\delta)+\epsilon(n)\log_q\gamma}
= n^{-1}q^{n-k} \gamma^{\epsilon(n)},
\]
where
\begin{equation}
\gamma \eqdef\frac{\delta}{(q-1)(1-\delta)} \in (0,1).\label{eq:gamma}
\end{equation}
This bound combined with Proposition~\ref{pr:UPacking} yields
the next corollary.

\begin{corollary}\label{co:Packing}
Let $C$ be an $[n,k]$ linear code satisfying \eqref{eq:WDofRLC}
and $S$ a ball in $\field{q}^n$ with
center $\vt{0}$ and radius
$r = \floor{\delta n-\epsilon(n)}$,
where $\delta$ is defined by \eqref{eq:delta} and $\epsilon(n)>0$.
Then there exists $B \subseteq S$ such that $\vt{0} \in B$,
$|B| > |S| (1 - \gamma^{\epsilon(n)})$,
and the family $\{\vt{c}+B\}_{\vt{c}\in C}$
of sets is pairwise disjoint,
where $\gamma$ is defined by \eqref{eq:gamma}.
(In particular, if we take $\epsilon(n) = \log_q n$, we obtain
a ``rough sphere'' packing with radius about $\delta n$,
almost as large as the distance of $C$.%
\footnote{``Rough sphere'' packing differs from list decoding
in that there is no uniform restriction on the number of codewords
within distance $r$ from every vector in $\field{q}^n$ although most
vectors have at most one codeword at distance $r$ or less from them.})
\end{corollary}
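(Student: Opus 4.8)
The plan is to read the corollary off the second (``in particular'') part of Proposition~\ref{pr:UPacking}, because the displayed computation immediately preceding the statement has already done the analytic work and delivered the size estimate $|S| < n^{-1}q^{n-k}\gamma^{\epsilon(n)}$. So the corollary is really just a matter of checking that the proposition applies and then matching its conclusions to the stated ones. First I would verify the two hypotheses of that part of the proposition: that $S$ is invariant under every monomial map, and that $|S| < q^{n-k}/n$. The invariance holds because each monomial map $\mmap_{\sigma,\vt{v}}$ preserves Hamming weight (permuting coordinates and rescaling by nonzero scalars cannot alter which coordinates vanish), so the weight ball $S=\{\vt{x}\in\field{q}^n:\wt(\vt{x})\le r\}$ is carried onto itself. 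The size bound follows from the estimate above together with $\gamma\in(0,1)$ (see \eqref{eq:gamma}) and $\epsilon(n)>0$, which force $\gamma^{\epsilon(n)}<1$ and hence $|S| < n^{-1}q^{n-k}\gamma^{\epsilon(n)} < q^{n-k}/n$. (Here I would also note the tacit assumption $r\ge 0$, i.e.\ $\delta n\ge\epsilon(n)$, needed so that $\vt{0}\in S$ and the statement is non-vacuous.)

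Applying the proposition then produces a set $B\subseteq S$ satisfying $(1)'$--$(3)'$. The pairwise disjointness of $\{\vt{c}+B\}_{\vt{c}\in C}$ is literally $(3)'$, so that part is free. For $\vt{0}\in B$, I would use $(1)'$: since $\wt(\vt{0})=0$ we have $\binom{n}{0}(q-1)^{0}=1$, and $1\le q^{n-k}/(n|S|)$ is equivalent to $|S|\le q^{n-k}/n$, which we just established. Hence $\vt{0}$ lies in the set $S_0$ of $(1)'$ and therefore in $B$.

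The remaining step is to reconcile the combinatorial bound $(2)'$, namely $|B|>|S|\bigl[1-nq^{-(n-k)}(|S|-|S_0|)\bigr]$, with the target $|B|>|S|(1-\gamma^{\epsilon(n)})$. I would discard the harmless term $|S_0|\ge 0$ (dropping it only weakens the subtracted quantity and so strengthens the bound), obtaining $|B|>|S|\bigl[1-nq^{-(n-k)}|S|\bigr]$, and then substitute $|S| < n^{-1}q^{n-k}\gamma^{\epsilon(n)}$, which gives $nq^{-(n-k)}|S| < \gamma^{\epsilon(n)}$ and hence $1-nq^{-(n-k)}|S| > 1-\gamma^{\epsilon(n)}$; since $|S|\ge 1$, multiplying through preserves the strict inequality and chaining the estimates yields the claim. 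I do not expect a genuine obstacle here: the real content lives in Proposition~\ref{pr:UPacking} and in the concavity-based estimate for $|S|$ that precedes the corollary. The only point demanding care is bookkeeping—making sure the preceding chain (Lemma~\ref{le:Binom.Ineq}, concavity of $\he_q$, and the definitions \eqref{eq:delta} and \eqref{eq:gamma}) delivers the estimate on $|S|$ in exactly the strict form $|S| < n^{-1}q^{n-k}\gamma^{\epsilon(n)}$, after which both the cardinality bound and the membership $\vt{0}\in B$ drop out mechanically.
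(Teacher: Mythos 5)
Your proposal is correct and follows exactly the route the paper intends: the corollary is stated as an immediate consequence of the displayed estimate $|S| < n^{-1}q^{n-k}\gamma^{\epsilon(n)}$ (via Lemma~\ref{le:Binom.Ineq} and concavity of $\he_q$) combined with the ``in particular'' part of Proposition~\ref{pr:UPacking}, and your bookkeeping---monomial invariance of the Hamming ball, $\vt{0}\in S_0\subseteq B$ from $(1)'$, and weakening $(2)'$ by dropping $|S_0|$ before substituting the size bound---fills in precisely the steps the paper leaves implicit.
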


Note that the maximum possible size of a ``ball'' for packing is
$q^{n-k}$, so the above ``rough sphere'' packing is asymptotically
the best that we can do.
This implies that when we are seeking linear codes with large
distance, we should also be careful to check their packing radius
of ``rough sphere'', especially those codes exceeding the GV bound.
Recall that the packing radius of a so-called perfect code
is only about one half of its distance, and it cannot be improved by
``rough sphere'' packing.

It is natural to ask why a linear code satisfying
\eqref{eq:WDofRLC} has a large packing radius of ``rough sphere''.
Clearly, it is due to the shape of the weight distribution.
The weight distribution of a linear code is more
important than its distance.
However, we have little knowledge about other kinds of weight
distributions that also enable a linear code to have good capability
of ``rough sphere'' packing.%
\footnote{One of the candidates might be polar codes
\cite{Arikan200907} although their minimum distances are
asymptotically bad.}
But note that Proposition~\ref{pr:UPacking} indicates that
linear codes satisfying \eqref{eq:WDofRLC} have magic capability
of packing in a more general sense, that is, it allows the
shape of filler $S$ to be arbitrary.
This may be called universal packing, and in fact it is well known
that random linear codes are universal for channel coding,
an intrinsic property that can be found in almost every
information-theoretic proof based on random linear codes
(see e.g., \cite{Yang200904}).
Using a similar terminology in signal processing,
we call an $[n,k]$ linear code a ``white'' code
if its weight distribution is close to
the shape of $\binom{n}{i}(q-1)^i$, roughly in the form
\begin{equation}
\frac{\wtd_i(C)}{\binom{n}{i}(q-1)^i} \approx q^{-(n-k)}
\qquad \forall i=1,2,\ldots,n,\label{eq:White}
\end{equation}
a main part of \eqref{eq:WDofRLC}.
A ``white'' code is difficult to ``attack'', even by
a deliberately designed noise, because the codewords of such a code
is uniformly spread in the spectrum (i.e., weight distribution)
and hence can easily avoid the attack of noise
by randomly choosing a monomial map.

Now that a ``white'' code is so good,
why not using this criterion in code design?
However, determining the weight distribution of a linear code
is a very difficult task, which makes the criterion impractical.
The success of minimum distance of linear codes is partly because
it is easier to compute.
In fact, computing minimum distance of a linear code is equivalent to
determining the leftmost weight segment in which
the weight distribution is zero,
and we note that if the distance does not exceed the GV bound,
the weight distribution in that segment happens to be ``white''
by \eqref{eq:White}.
Then it is natural to ask if there is other ``zero'' weight segment
that also coincides with the ``white'' condition \eqref{eq:White}.
By checking \eqref{eq:White}, it is easy to find that
there is possibly a rightmost weight segment in which
the weight distribution is zero.
So as a compromise between minimum distance and \eqref{eq:White},
we may design a criterion that tracks the leftmost and rightmost
weight segments of zero weight distribution.
But in order to track these two segments,
we would need two parameters, say $(d_1, d_2)$, for example.
Can we find only one parameter to track both of these two segments?
Yes, we can. It is entropy distance.

\begin{definition}
\label{def:ed1}
The \emph{entropy distance} $\ed(\vt{x}, \vt{y})$ between
$\vt{x}, \vt{y} \in \field{q}^n$ is defined by
\[
\ed(\vt{x},\vt{y}) \eqdef \ef_{q,n}(\wt(\vt{x}-\vt{y})),
\]
where
\[
\ef_{q,n}(i) \eqdef \log_q \left[\binom{n}{i} (q-1)^{i}\right]
\qquad \text{for $i=0,1,\ldots,n$}.
\]
The \emph{entropy weight} $\ew(\vt{x})$ of
$\vt{x} \in \field{q}^n$ is defined as $\ed(\vt{x}, \vt{0})$.%
\footnote{A different definition of entropy weight is given
in \cite{Yang.M1} based on a variant of complete weight.
They are similar but motivated by different random coding techniques,
i.e., random monomial map and random coordinate permutation.
Obviously, the definition in this paper is better
because monomial maps include coordinate permutations
as a proper subset.}
The \emph{entropy distance} $\ed(C)$ of a
linear code $C$ is defined to be the smallest entropy distance
between distinct codewords of $C$, or equivalently, the minimum
entropy weight of nonzero codewords of $C$.
\end{definition}

At first glance, the definition of entropy distance may seem
very artificial, but the next propositions will convince the reader
that this definition is so natural that it qualifies as a metric
or pseudometric.
The name ``entropy distance'' comes from
the property~\ref{enu:h.entropy} in Proposition~\ref{pr:h.property}.

\begin{proposition}
\label{pr:h.property}
Let $q\ge 2$, $n \ge 1$, and $0\le i\le n$.
\begin{enumerate}%
\renewcommand{\labelenumi}{\textnormal{(\arabic{enumi})}}%
\renewcommand{\theenumi}{(\arabic{enumi})}
\item\label{enu:h.range} $0\le \ef_{q,n}(i) < n$.
\item\label{enu:h.max} Let $x_0\eqdef [(q-1)n-1]/q$ and then
\[
\ef_{q,n}(i)\lesseqqgtr \ef_{q,n}(i+1)
\qquad\text{for $i\lesseqqgtr x_0$}.
\]
The function $\ef_{q,n}(i)$ has one or two maxima
at $i=\ceil{x_0}, \floor{x_0}+1$.
\item \label{enu:h.entropy}
For $\alpha\in[0,1]$,
\[
\lim_{n\to\infty}\frac{1}{n}\ef_{q,n}(\floor{\alpha n})=\he_q(\alpha).
\]
\end{enumerate}
\end{proposition}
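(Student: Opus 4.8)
The plan is to handle parts~\ref{enu:h.range} and~\ref{enu:h.max} by elementary manipulation of the product $\binom{n}{i}(q-1)^i$ and its successive ratios, and to obtain part~\ref{enu:h.entropy} from a two-sided estimate of the binomial coefficient. For part~\ref{enu:h.range} the lower bound is immediate: when $i\ge 1$ both $\binom{n}{i}\ge 1$ and $(q-1)^i\ge 1$ (as $q\ge 2$), so $\binom{n}{i}(q-1)^i\ge 1$ and $\ef_{q,n}(i)\ge 0$, with equality at $i=0$. For the upper bound the binomial theorem gives $\sum_{j=0}^n\binom{n}{j}(q-1)^j=(1+(q-1))^n=q^n$; since $n\ge 1$ this is a sum of at least two strictly positive terms, so each single term satisfies $\binom{n}{i}(q-1)^i<q^n$, whence $\ef_{q,n}(i)<n$.

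For part~\ref{enu:h.max} I would compute the successive ratio
\[
\frac{\binom{n}{i+1}(q-1)^{i+1}}{\binom{n}{i}(q-1)^{i}}=\frac{(n-i)(q-1)}{i+1},
\]
so that $\ef_{q,n}(i+1)-\ef_{q,n}(i)$ has the same sign as $(n-i)(q-1)-(i+1)=(q-1)n-1-qi$. This quantity is positive, zero, or negative exactly according as $i\lessgtr x_0$ or $i=x_0$ with $x_0=[(q-1)n-1]/q$, which is the stated $\lesseqqgtr$ relation. The location of the maxima then follows by case analysis: if $x_0\notin\integers$ the sequence strictly increases up to $i=\ceil{x_0}=\floor{x_0}+1$ and strictly decreases afterwards, giving a unique maximum; if $x_0\in\integers$ the ratio equals $1$ at $i=x_0$, so $\ef_{q,n}(x_0)=\ef_{q,n}(x_0+1)$ and the maximum is attained at both $i=\ceil{x_0}=x_0$ and $i=\floor{x_0}+1=x_0+1$.

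For part~\ref{enu:h.entropy} I would write
\[
\frac1n\ef_{q,n}(\floor{\alpha n})
=\frac1n\log_q\binom{n}{\floor{\alpha n}}+\frac{\floor{\alpha n}}{n}\log_q(q-1),
\]
whose second term converges to $\alpha\log_q(q-1)$ since $\floor{\alpha n}/n\to\alpha$. For the first term I would use a two-sided bound. The upper bound $\binom{n}{i}(q-1)^i\le q^{n\he_q(i/n)}$ is exactly Lemma~\ref{le:Binom.Ineq}; a matching lower bound $\binom{n}{i}(q-1)^i\ge (n+1)^{-1}q^{n\he_q(i/n)}$ holds because, among the $n+1$ terms of $\sum_{j=0}^n\binom{n}{j}(i/n)^j(1-i/n)^{n-j}=1$, the one at $j=i$ is the mode and hence at least $1/(n+1)$; rearranging this inequality and multiplying by $(q-1)^i$ gives the claim. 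Dividing by $n$ then yields
\[
\he_q\!\Big(\frac{\floor{\alpha n}}{n}\Big)-\frac{\log_q(n+1)}{n}
\le\frac1n\ef_{q,n}(\floor{\alpha n})
\le\he_q\!\Big(\frac{\floor{\alpha n}}{n}\Big),
\]
and since $\he_q$ is continuous and $\floor{\alpha n}/n\to\alpha$, both bounds tend to $\he_q(\alpha)$. The boundary cases $\alpha\in\{0,1\}$ are in fact exact rather than merely asymptotic, as $\ef_{q,n}(0)=0=\he_q(0)$ and $\ef_{q,n}(n)=n\log_q(q-1)=n\he_q(1)$. The only genuinely delicate point is the justification that $j=i$ is the mode underlying the lower bound, but this is a routine verification (the mode of the binomial with parameter $i/n$ lies at $i$ for $0<i<n$), so I do not expect a real obstacle here.
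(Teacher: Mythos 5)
Your proposal is correct and takes essentially the same route as the paper: parts \ref{enu:h.range} and \ref{enu:h.max} use the identical observations ($q^n=\sum_{j=0}^n\binom{n}{j}(q-1)^j$ and the sign of the successive ratio $(q-1)(n-i)/(i+1)$, with the two-maxima case when $x_0\in\integers$), while for part \ref{enu:h.entropy} you simply write out in full the two-sided type-counting bound $\frac{1}{n+1}q^{n\he_q(i/n)}\le\binom{n}{i}(q-1)^i\le q^{n\he_q(i/n)}$ that the paper delegates to Lemma~\ref{le:Binom.Ineq} and the citation to Cover--Thomas. Your mode verification is indeed routine (the ratio of consecutive terms of the binomial with parameter $i/n$ crosses $1$ exactly at $j=i$ for $0<i<n$, and the endpoint cases are trivial), so there is no gap.
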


\begin{proposition}
\label{pr:ED.property}
Let $a \in \field{q}$ and $\vt{x},\vt{y},\vt{z} \in\field{q}^n$.
\begin{enumerate}%
\renewcommand{\labelenumi}{\textnormal{(\arabic{enumi})}}%
\renewcommand{\theenumi}{(\arabic{enumi})}
\item $0\le \ew(\vt{x}) < n$.
\item For $q=2$, $\ew(\vt{x}) = 0$
 if and only if $\vt{x}=\vt{0}$ or $\vt{1}$.\\
 For $q\ge 3$, $\ew(\vt{x}) = 0$ if and only if $\vt{x}=\vt{0}$.
\item $\ew(a\vt{x}) = |a|\ew(\vt{x})$ with $|a|\eqdef\wt(a)$.
\item \label{enu:ED.property.TI}
 $q^{\ew(\vt{x}+\vt{y})} \le \beta(\wt(\vt{x}),\wt(\vt{y}))
 q^{\ew(\vt{x})+\ew(\vt{y})}$, where
 \[
  \beta(w_1, w_2) \eqdef \frac{\max\{w_1, n-w_1, w_2, n-w_2\}}{n}
  \in [0.5,1].
 \]
\item $0\le \ed(\vt{x},\vt{y}) < n$. (Non-negativity)
\item For $q=2$, $\ed(\vt{x},\vt{y}) = 0$
 if and only if $\vt{x}=\vt{y}$ or $\vt{x}=\vt{y}+\vt{1}$.\\
 For $q\ge 3$, $\ed(\vt{x},\vt{y}) = 0$ if and only if $\vt{x}=\vt{y}$.
 (Identity of indiscernibles)
\item $\ed(\vt{x},\vt{y}) = \ed(\vt{y},\vt{x})$. (Symmetry)
\item $\ed(\vt{x},\vt{z}) \le \ed(\vt{x},\vt{y})+\ed(\vt{y},\vt{z})$.
  (Triangle inequality)\label{enu:ED.property.TI2}
\end{enumerate}
\end{proposition}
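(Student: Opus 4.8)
The plan is to dispatch every item except part~\ref{enu:ED.property.TI} quickly and then concentrate all effort on that one. Parts (1) and (5) are just Proposition~\ref{pr:h.property}\ref{enu:h.range} applied to $\wt(\vt x)$ and to $\wt(\vt x-\vt y)$ (recall $\ed(\vt x,\vt y)=\ew(\vt x-\vt y)=\ef_{q,n}(\wt(\vt x-\vt y))$). For parts (2) and (6) (the latter being (2) applied to $\vt x-\vt y$) I would observe that $q^{\ew(\vt x)}=\binom{n}{w}(q-1)^{w}$ with $w=\wt(\vt x)$ is a product of two positive integers, hence equals $1$ iff each factor is $1$; now $\binom{n}{w}=1$ forces $w\in\{0,n\}$ while $(q-1)^{w}=1$ forces $w=0$ unless $q=2$, which yields exactly the stated dichotomy (and $\wt(\vt x-\vt y)=n$ over $\field{2}$ means $\vt x=\vt y+\vt 1$). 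Part (3) splits into $a=0$ (then $a\vt x=\vt 0$) and $a\ne 0$ (a nonzero scalar preserves the support, so $\wt(a\vt x)=\wt(\vt x)$), and part (7) is immediate from $\wt(\vt x-\vt y)=\wt(\vt y-\vt x)$. Finally part~\ref{enu:ED.property.TI2} follows from part~\ref{enu:ED.property.TI}: setting $\vt u=\vt x-\vt y$ and $\vt v=\vt y-\vt z$ so that $\vt u+\vt v=\vt x-\vt z$, and using $\beta\le 1$, I obtain $q^{\ew(\vt u+\vt v)}\le\beta\,q^{\ew(\vt u)+\ew(\vt v)}\le q^{\ew(\vt u)+\ew(\vt v)}$, and taking $\log_q$ gives the triangle inequality.

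Everything thus reduces to part~\ref{enu:ED.property.TI}, the real obstacle. Writing $w_1=\wt(\vt x)$, $w_2=\wt(\vt y)$, $w_3=\wt(\vt x+\vt y)$ and expanding each $q^{\ew(\cdot)}$, the inequality reads
\[
\binom{n}{w_3}(q-1)^{w_3}\le\beta(w_1,w_2)\binom{n}{w_1}(q-1)^{w_1}\binom{n}{w_2}(q-1)^{w_2}.
\]
Subadditivity of Hamming weight gives $w_3\le w_1+w_2$, so $(q-1)^{w_3-w_1-w_2}\le 1$ (as $q-1\ge1$); this strips the powers of $q-1$ and leaves the purely combinatorial inequality
\[
\binom{n}{w_3}\le\beta(w_1,w_2)\binom{n}{w_1}\binom{n}{w_2},
\]
to be proved for every admissible $w_3$, namely $|w_1-w_2|\le w_3\le\min(w_1+w_2,n)$. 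The lower bound $w_3\ge|w_1-w_2|$ (the reverse triangle inequality for Hamming weight) is essential here and cannot be discarded, since without it the statement is false.

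To attack the binomial inequality I would first absorb the factor $\beta$ into a binomial coefficient. Assuming, by the symmetry in $w_1,w_2$, that the maximum defining $\beta$ is attained by $w_2$, the elementary identities $\frac{w_2}{n}\binom{n}{w_2}=\binom{n-1}{w_2-1}$ and $\frac{n-w_2}{n}\binom{n}{w_2}=\binom{n-1}{w_2}$ give $\beta\binom{n}{w_2}=\max\{\binom{n-1}{w_2-1},\binom{n-1}{w_2}\}$, so the goal becomes $\binom{n}{w_3}\le\binom{n}{w_1}\binom{n-1}{w_2'}$ with $w_2'\in\{w_2-1,w_2\}$ the index nearer $(n-1)/2$. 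From here the plan is to combine the unimodality of $\binom{n}{\cdot}$ (Proposition~\ref{pr:h.property}\ref{enu:h.max}), which pins the worst value of $\binom{n}{w_3}$ at the admissible $w_3$ nearest $n/2$, with the multiplicative identity $\binom{n}{a}\binom{n-a}{b}=\binom{n}{a+b}\binom{a+b}{a}$ used to bound that value; the reflection $w\mapsto n-w$ of $\binom{n}{\cdot}$ then trims the case analysis (say to $w_1\le w_2\le n/2$).

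The step I expect to be hardest is the regime where the admissible interval straddles $n/2$ while $w_1$ is small, so that $w_3$ is forced close to $n/2$ and $\binom{n}{w_3}$ sits near its maximum. There the inequality is essentially tight — equality already holds in the binomial inequality at $n=2$, $w_1=w_2=w_3=1$ — so any crude estimate, such as replacing $w_3$ by $w_1+w_2$ (illegitimate in any case, since $\binom{n}{\cdot}$ is not monotone), is hopeless, and one must use both admissibility constraints on $w_3$ in full together with the exact absorption of $\beta$ above.
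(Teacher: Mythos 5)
Your handling of every part except \ref{enu:ED.property.TI} is fine and is essentially what the paper leaves to the reader (including the deduction of \ref{enu:ED.property.TI2} from \ref{enu:ED.property.TI} via $\beta\le 1$). But your attack on \ref{enu:ED.property.TI} founders at the very first reduction: stripping the powers of $q-1$ using $w_3\le w_1+w_2$ leaves the target
\[
\binom{n}{w_3}\le\beta(w_1,w_2)\binom{n}{w_1}\binom{n}{w_2}
\qquad\text{for all } |w_1-w_2|\le w_3\le\min\{w_1+w_2,n\},
\]
and this target is \emph{false}. Take $n=4$, $w_1=4$, $w_2=3$, $w_3=2$ (admissible in your sense, since $1\le 2\le 4$): then $\beta(4,3)=\max\{4,0,3,1\}/4=1$, while $\binom{4}{2}=6>4=\binom{4}{4}\binom{4}{3}$. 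This triple is genuinely realizable for $q\ge 3$, e.g.\ over $\field{3}$ with $\vt{x}=(1,1,1,1)$, $\vt{y}=(2,2,1,0)$, so that $\vt{x}+\vt{y}=(0,0,2,1)$; the original inequality survives there only because of the factor $(q-1)^{w_1+w_2-w_3}=2^{5}$ that you discarded. (For $q=2$, where stripping is lossless, this triple is excluded by the parity constraint $w_3\equiv w_1+w_2\pmod 2$ and the bound $w_3\le 2n-w_1-w_2$, neither of which is in your admissible set, so even the binary instance of your reduced statement is misposed.) In short, the inequality is irreducibly $q$-dependent, and no purely combinatorial statement over your relaxed constraint set can serve as the intermediate goal; your $\beta$-absorption identity and unimodality plan, however carried out, would be attempting to prove a false assertion.

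The paper's own proof sidesteps this: after the boundary cases $\wt(\vt{x}+\vt{y})\in\{0,n\}$, it inducts on $n$, picking one coordinate with $x_i+y_i=0$ and one with $x_i+y_i\ne 0$, and using Pascal's identity to split $q^{\ew(\vt{x}+\vt{y})}=q^{\ew(\vt{x}'+\vt{y}')}+(q-1)\,q^{\ew(\vt{x}''+\vt{y}'')}$ before applying the induction hypothesis to each piece. Tellingly, the inner maximum in the paper's estimate carries an indicator $1\{q\ge 3\}$ --- over $\field{2}$, a coordinate with $x_i+y_i\ne 0$ cannot have both entries nonzero --- which is precisely the field-dependence your reduction erases. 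To salvage a direct argument you would have to retain the exact exponent $w_1+w_2-w_3$, e.g.\ by tracking the overlap pattern of the supports of $\vt{x}$ and $\vt{y}$, at which point you are essentially reconstructing the paper's induction.
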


In the next few sections, we shall investigate the issue about
linear codes and linear encoders with large entropy distance
as an independent mathematical problem,
but the reader should keep in mind that entropy distance
is only a simplification of condition \eqref{eq:White}.

The proofs of results in this section are presented in
Appendix~\ref{app:M&D.proof}.

\section{Entropy Distance of Linear Codes}
\label{sec:LC.ED}

In this section, we shall investigate the properties of entropy
distance of linear codes, especially concerning those codes with
large entropy distance.
Let us begin with some examples about the entropy distance of
some familiar linear codes.

Recall that an $[n,k]$ linear code $C$ is characterized by
a $k\times n$ generator matrix $\mat{G}$
(such that $C=\{\vt{x}\mat{G}: \vt{x}\in\field{q}^k\}$)
or an $(n-k)\times n$ parity-check matrix $\mat{H}$ (such that
$C=\{\vt{x}\in\field{q}^n: \mat{H}\transpose{\vt{x}}=\vt{0}\}$).
A linear code is called the \emph{dual} of $C$ if its parity-check
(resp., generator) matrix is a generator (resp., parity-check)
matrix of $C$.
We denote the dual of $C$ by $C^{\perp}$.
The famous MacWilliams identities tell us that
\[
\wte_{C^{\perp}}(x,y) = \frac{1}{|C|} \wte_C(y-x, y+(q-1)x),
\]
where $\wte_C(x,y) \eqdef \sum_{i=0}^n \wtd_i(C)x^iy^{n-i}$
is the \emph{(homogeneous) weight enumerator} of $C$
(see e.g., \cite{Huffman200300}).

\begin{example}
Let $C$ be an $[n,1,n]$ repetition code whose generator matrix
is an $1\times n$ all-one matrix.
Then its weight enumerator is
\begin{equation}
\wte_C(x,y)=(q-1)x^n+y^n \label{eq:repcode.wte}
\end{equation}
and hence $\ed(C) = n\log_q(q-1)$.
\end{example}

\begin{example}
Let $C$ be an $[n,n-1,2]$ single parity-check code whose parity-check matrix
is an $1\times n$ all-one matrix, where $n \ge 2$.
By MacWilliams identities with \eqref{eq:repcode.wte}, we have
\[
\wte_C(x,y)=\frac{1}{q}\{(q-1)(y-x)^n+[y+(q-1)x]^n\},
\]
hence
\[
\wtd_1(C)=0,\quad
\wtd_2(C)=(q-1)\binom{n}{2},
\]
\[
\wtd_{n-1}(C)=\frac{n[(q-1)^{n-1}+(q-1)(-1)^{n-1}]}{q},\quad
\wtd_n(C)=\frac{(q-1)^n+(q-1)(-1)^{n}}{q},
\]
and therefore
\[
\ed(C)=\begin{cases}
\log_2 n &if $q=2$ and $n$ is odd,\\
0 &if $q=2$ and $n$ is even,\\
n\log_3 2 &if $q=3$ and $n=3,4,5$,\\
\ef_{q,n}(2) &otherwise.
\end{cases}
\]
\end{example}

\begin{example}
\label{ex:simpcode}
Let $C$ be a $[(q^k-1)/(q-1), k, q^{k-1}]$ simplex code whose
generator matrix consists of $(q^k-1)/(q-1)$ pairwise linearly
independent column vectors,
each chosen from a $1$-dimensional subspace of $\field{q}^k$.
By \cite[Theorem~2.7.5]{Huffman200300}, its weight enumerator is
\begin{equation}
\wte_C(x,y)=(q^k-1)x^{q^{k-1}}y^{(q^{k-1}-1)/(q-1)}+y^{(q^k-1)/(q-1)},
\label{eq:simpcode.wte}
\end{equation}
hence $\hd(C)=\ef_{q,(q^k-1)/(q-1)}(q^{k-1})$, which is the maximum of
$\ef_{q,(q^k-1)/(q-1)}$ by Proposition~\ref{pr:h.property}.
\end{example}

\begin{example}
Let $C$ be a $[(q^k-1)/(q-1), (q^k-1)/(q-1)-k,3]$ Hamming code,
the dual of a $[(q^k-1)/(q-1), k]$ simplex code, where $k\ge 2$.
Using MacWilliams identities with \eqref{eq:simpcode.wte}, we get
\[
\wte_C(x,y)=\frac{1}{q^k}\left\{
 (q^k-1)(y-x)^{q^{k-1}}[y+(q-1)x]^{(q^{k-1}-1)/(q-1)}
 +[y+(q-1)x]^{(q^{k}-1)/(q-1)}\right\},
\]
hence
\[
\wtd_1(C)=\wtd_2(C)=0,\quad
\wtd_3(C)>0,
\]
\[
\wtd_{(q^k-1)/(q-1)}(C)=\frac{(q-1)^{(q^{k-1}-1)/(q-1)}
 [(q-1)^{q^{k-1}} + (-1)^{q^{k-1}}(q^k-1)]}{q^k},
\]
and therefore
\[
\ed(C)=\begin{cases}
0 &if $q=2$,\\
5\log_4 3 &if $q=4$ and $k=2$,\\
\ef_{q,(q^k-1)/(q-1)}(3) &otherwise.
\end{cases}
\]
\end{example}

\begin{example}
Let $C$ be the $[2^m, \sum_{i=0}^r\binom{m}{i}, 2^{m-r}]$
$r$th order binary Reed-Muller (RM) code
(see e.g., \cite{Huffman200300, Lin200400}), where $0\le r\le m$.
Then $\ed(C)=0$ because the $0$th order binary RM code
which is contained in every $r$th order RM code
contains the all-one vector.
Certainly, it is easy to construct codes of large entropy distance
from RM codes with $r\ge 1$.
We may choose an arbitrary coordinate $i$ and let $C'$ be
the subcode of $C$ in which every codeword has symbol zero in
coordinate $i$.
Clearly, $C'$ is of dimension $\sum_{i=1}^r\binom{m}{i}$ and
its entropy distance is $\log_2 \binom{2^m}{2^{m-r}}$.
Puncturing $C'$ on coordinate $i$ further gives a code $C''$ of length
$2^m-1$ and entropy distance
\[
\min\left\{\log_2 \binom{2^m-1}{2^{m-r}},
\log_2 \binom{2^m-1}{2^m-2^{m-r}}\right\}
=\log_2 \binom{2^m-1}{2^{m-r}-1}.
\]
Indeed, the binary simplex code can be constructed in this way with
$r=1$.
\end{example}

Next, we present several bounds on the size of
a linear code with a given entropy distance.
For $0\le h\le\ef_{q,n}(\ceil{[(q-1)n-1]/q})$,
we denote by $D_q(n,h)$ the largest number of
codewords in a linear code over $\field{q}$ of length $n$
and entropy distance not less than $h$.
The next few propositions provide rather simple properties
of $D_q(n,h)$.

\begin{proposition}
\[
D_q(n,\ef_{n,q}(1))=\begin{cases}
2^{n-1} &if $q=2$,\\
q^n &otherwise.
\end{cases}
\]
\end{proposition}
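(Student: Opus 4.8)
The plan is to read off the threshold $\ef_{q,n}(1)=\log_q[n(q-1)]$ as exactly the entropy weight of the weight-one vectors, and then to use the unimodality of $\ef_{q,n}$ established in Proposition~\ref{pr:h.property} to determine precisely which Hamming weights $w\in\{1,\dots,n\}$ satisfy $\ef_{q,n}(w)\ge\ef_{q,n}(1)$. Since $\ef_{q,n}$ increases up to $x_0=[(q-1)n-1]/q$ and decreases afterwards, its minimum over the integer range $\{1,\dots,n\}$ is attained at one of the two endpoints, so the entire question collapses to comparing $\ef_{q,n}(1)=\log_q[n(q-1)]$ with $\ef_{q,n}(n)=n\log_q(q-1)$. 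A linear code $C$ has $\ed(C)\ge\ef_{q,n}(1)$ iff every nonzero codeword has its weight in the admissible range, so I must first pin down that range and then maximize the subspace.

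The decisive computation is the endpoint comparison: $\ef_{q,n}(n)\ge\ef_{q,n}(1)$ is equivalent to $(q-1)^{n-1}\ge n$. For $q\ge 3$ one has $q-1\ge 2$, whence $(q-1)^{n-1}\ge 2^{n-1}\ge n$ for every $n$, so the endpoint minimum is $\ef_{q,n}(1)$ and \emph{every} nonzero $\vt{x}\in\field{q}^n$ satisfies $\ew(\vt{x})=\ef_{q,n}(\wt(\vt{x}))\ge\ef_{q,n}(1)$. Hence the whole space $C=\field{q}^n$ is itself a linear code with $\ed(C)\ge\ef_{q,n}(1)$; as no linear code of length $n$ can exceed $q^n$ codewords, this forces $D_q(n,\ef_{q,n}(1))=q^n$.

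For $q=2$ the inequality $(q-1)^{n-1}\ge n$ fails at the top end. Here $\ef_{2,n}(w)=\log_2\binom{n}{w}$, and unimodality gives $\binom{n}{w}\ge\min\{\binom{n}{1},\binom{n}{n-1}\}=n$ for all $1\le w\le n-1$, whereas the weight $w=n$ yields $\binom{n}{n}=1<n$ for $n\ge 2$. Thus a nonzero vector has entropy weight below $\ef_{2,n}(1)$ exactly when its Hamming weight is $n$, i.e.\ when it equals $\vt{1}$. A binary linear code therefore satisfies $\ed(C)\ge\ef_{2,n}(1)$ if and only if $\vt{1}\notin C$, and the problem reduces to finding the largest subspace of $\field{2}^n$ that avoids the fixed nonzero vector $\vt{1}$.

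The remaining step, the only genuinely combinatorial one, is to show such a subspace has maximal dimension $n-1$: the full space has dimension $n$ and contains $\vt{1}$, whereas the hyperplane $\{\vt{x}:x_1=0\}$ has dimension $n-1$ and excludes $\vt{1}$, giving $D_2(n,\ef_{2,n}(1))=2^{n-1}$. I expect the main obstacle to be purely bookkeeping: making the unimodality argument rigorous at the integer endpoints and confirming that $w=n$ is the \emph{only} forbidden weight when $q=2$, so that the constraint collapses to the single exclusion $\vt{1}\notin C$. The degenerate case $n=1$ should be handled separately, since then $\ef_{2,1}(1)=0$ and the stated bound does not apply.
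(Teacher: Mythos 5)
Your proof is correct, and since the paper leaves this proposition's proof to the reader, your route is evidently the intended one: unimodality of $\ef_{q,n}$ from Proposition~\ref{pr:h.property} reduces everything to the endpoint comparison $\ef_{q,n}(n)\ge\ef_{q,n}(1)$, i.e.\ $(q-1)^{n-1}\ge n$, which holds for all $q\ge 3$ (so the full space $\field{q}^n$ attains $q^n$), while for $q=2$ it fails only at weight $n$, so the constraint collapses to $\vt{1}\notin C$ and a coordinate hyperplane gives the optimum $2^{n-1}$. Your flag of the degenerate case $q=2$, $n=1$ is also apt: there $\ef_{2,1}(1)=0$, the whole space $\field{2}$ has entropy distance $0\ge 0$, and $D_2(1,0)=2\ne 2^{0}$, so the stated formula implicitly assumes $n\ge 2$ in the binary case.
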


The proof is left to the reader.

\begin{proposition}
\label{pr:Puncturing}
Let $[d_1, d_2]=\ef_{n,q}^{-1}([h,n))$.
If $d_1\ge 2$, then
\[
D_q(n,h)\le D_q(n-1,\min\{\ef_{q,n-1}(d_1-1),
 \ef_{q,n-1}(\min\{d_2, n-1\})\}).
\]
\end{proposition}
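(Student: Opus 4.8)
The plan is to realize $D_q(n,h)$ by an optimal code and then puncture it on a single coordinate, tracking how the weights of the codewords move. First I would unpack the hypothesis. By Proposition~\ref{pr:h.property} the function $\ef_{q,n}$ is unimodal and strictly bounded above by $n$, so the preimage $\ef_{q,n}^{-1}([h,n))=[d_1,d_2]$ really is an interval of consecutive integers, and a length-$n$ linear code $C$ satisfies $\ed(C)\ge h$ exactly when every nonzero codeword $\vt{c}$ has $\wt(\vt{c})\in[d_1,d_2]$. I would choose such a $C$ with $|C|=D_q(n,h)$ and let $C^*$ be the code obtained by deleting one fixed coordinate (say the last) from every codeword of $C$.

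The first substantive step is that puncturing preserves the size, i.e.\ $|C^*|=|C|$. The puncturing map is $\field{q}$-linear, and its kernel consists of the codewords supported on the deleted coordinate alone, hence of weight at most $1$. Since $d_1\ge 2$, every nonzero codeword of $C$ has weight at least $d_1\ge 2$, so the kernel is trivial and puncturing is a bijection of $C$ onto $C^*$.

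Next I would control the weights appearing in $C^*$. Deleting one coordinate from a codeword $\vt{c}$ of weight $w\in[d_1,d_2]$ yields a codeword of weight $w$ or $w-1$, according as the deleted symbol is zero or nonzero; hence every nonzero codeword of $C^*$ has weight in $[d_1-1,d_2]$. Since $C^*$ has length $n-1$ no weight can exceed $n-1$, which refines the range to $[d_1-1,\min\{d_2,n-1\}]$. The one point of care is the case $d_2=n$: here $w=n$ forces the deleted symbol to be nonzero, so the weight drops to $n-1$, consistent with the bound. This interval is nonempty because $d_1-1\ge 1$.

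Finally, invoking unimodality of $\ef_{q,n-1}$ (again Proposition~\ref{pr:h.property}) on this weight range, the minimum of $\ef_{q,n-1}$ over an integer interval is attained at one of its endpoints, so, writing $h^*\eqdef\min\{\ef_{q,n-1}(d_1-1),\ef_{q,n-1}(\min\{d_2,n-1\})\}$, we get
\[
\ed(C^*)=\min_{\vt{c}^*\ne\vt{0}}\ef_{q,n-1}(\wt(\vt{c}^*))\ge h^*.
\]
Thus $C^*$ is a length-$(n-1)$ code with $\ed(C^*)\ge h^*$ and $|C^*|=D_q(n,h)$, whence $D_q(n-1,h^*)\ge|C^*|=D_q(n,h)$, the claimed inequality. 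I do not anticipate a serious obstacle: this is the standard puncturing bound, and the only delicate points are the endpoint bookkeeping of the weight interval (the $\min\{d_2,n-1\}$ term, i.e.\ the $d_2=n$ degeneracy) and the appeal to unimodality that legitimately collapses the minimization to the two endpoints.
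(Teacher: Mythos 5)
Your proposal is correct and follows essentially the same route as the paper's proof: puncture an optimal code on one coordinate, use $d_1\ge 2$ to see that the size is preserved, observe that the punctured weights land in $[d_1-1,\min\{d_2,n-1\}]$, and collapse the minimization of $\ef_{q,n-1}$ to the two endpoints via unimodality. Your version merely makes explicit two steps the paper leaves implicit (the kernel argument for injectivity of puncturing and the appeal to Proposition~\ref{pr:h.property} for the endpoint reduction), which is a welcome bit of extra care rather than a divergence.
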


\begin{proof}
Let $C$ be a linear code of length $n$ and entropy distance at least
$h$ with $M$ codewords.
Then the range of distances between distinct codewords in $C$
are from $d_1$ to $d_2$.
Since $d_1\ge 2$, puncturing on any coordinate yields a code $C'$
also with $M$ codewords, and the distances between distinct codewords
in $C'$ are between $d_1-1$ and $\min\{d_2, n-1\}$,
so that the entropy distance of $C'$ is bounded below by
either $\ef_{q,n-1}(d_1-1)$ or $\ef_{q,n-1}(\min\{d_2, n-1\})$.
Therefore
\[
M\le D_q(n-1,\min\{\ef_{q,n-1}(d_1-1),
 \ef_{q,n-1}(\min\{d_2, n-1\})\}),
\]
and the proof is complete by letting $M=D_q(n,h)$.
\end{proof}

\begin{proposition}
\label{pr:Shortening}
Let $[d_1, d_2]=\ef_{n,q}^{-1}([h,n))$.
Then
\[
D_q(n,h)\le qD_q(n-1,\min\{\ef_{q,n-1}(d_1),
 \ef_{q,n-1}(\min\{d_2, n-1\})\}).
\]
\end{proposition}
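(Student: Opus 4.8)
The plan is to run the \emph{shortening} counterpart of the puncturing argument in Proposition~\ref{pr:Puncturing}: shortening sacrifices a factor of $q$ in the number of codewords but, unlike puncturing, never merges two distinct codewords, which is why no lower bound on $d_1$ is needed here. First I would take a code $C$ of length $n$ with $\ed(C)\ge h$ achieving $M\eqdef D_q(n,h)$ codewords. By Definition~\ref{def:ed1} together with the unimodal shape of $\ef_{q,n}$ from Proposition~\ref{pr:h.property}\ref{enu:h.max}, the condition $\ed(C)\ge h$ says precisely that every nonzero codeword of $C$ has Hamming weight in the interval $[d_1,d_2]=\ef_{q,n}^{-1}([h,n))$.

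Next I would shorten $C$ on one coordinate, say the last: put $C_0\eqdef\{\vt{c}\in C:c_n=0\}$ and let $C'$ be the length-$(n-1)$ code obtained by deleting coordinate $n$ from the codewords of $C_0$. The evaluation $\vt{c}\mapsto c_n$ is an $\field{q}$-linear map from $C$ onto a subspace of $\field{q}$, so its kernel $C_0$ has index $1$ or $q$ in $C$; either way $|C_0|\ge M/q$. Since every codeword of $C_0$ is zero in coordinate $n$, deleting that coordinate cannot collapse two distinct codewords into one, whence $|C'|=|C_0|\ge M/q$.

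It remains to bound $\ed(C')$ from below. Deleting a coordinate on which $C_0$ is identically zero preserves Hamming weights, so each nonzero codeword of $C'$ has weight $w$ with $d_1\le w\le\min\{d_2,n-1\}$, the upper cap $n-1$ reflecting that the codewords now have length $n-1$. Applying unimodality of $\ef_{q,n-1}$ again (Proposition~\ref{pr:h.property}\ref{enu:h.max}), the minimum of $\ef_{q,n-1}(w)$ over this integer range is attained at an endpoint, so $\ed(C')\ge\min\{\ef_{q,n-1}(d_1),\ef_{q,n-1}(\min\{d_2,n-1\})\}$; denote this bound by $h'$. Thus $C'$ qualifies in the definition of $D_q(n-1,h')$, giving $M/q\le|C'|\le D_q(n-1,h')$, and multiplying by $q$ yields $D_q(n,h)=M\le qD_q(n-1,h')$.

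The only subtle step is the weight bookkeeping: one must observe that shortening preserves weights exactly (whereas puncturing can lower them by one), justify replacing $d_2$ by $\min\{d_2,n-1\}$ because the new length is $n-1$, and invoke the unimodal profile of $\ef_{q,n-1}$ to collapse the minimum over $[d_1,\min\{d_2,n-1\}]$ to its two endpoints. The factor $q$ and everything else is elementary linear algebra.
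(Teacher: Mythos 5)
Your proof is correct and follows essentially the same route as the paper's: both pass to the subcode $C(0)$ of codewords ending in $0$ (which has at least $M/q$ codewords), delete the last coordinate, observe that weights are preserved and now lie in $[d_1,\min\{d_2,n-1\}]$, and use the unimodality of $\ef_{q,n-1}$ to bound the entropy distance of the shortened code by the minimum of its values at the two endpoints. Your write-up merely makes explicit two steps the paper leaves implicit, namely the kernel/index argument for $|C_0|\ge M/q$ and the endpoint reduction via Proposition~\ref{pr:h.property}\ref{enu:h.max}.
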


\begin{proof}
Let $C$ be a linear code of length $n$ and entropy distance at least
$h$ with $M$ codewords.
Let $C(x)$ be the subcode of $C$ in which every codeword ends with
symbol $x$.
Then $C(0)$ contains at least $M/q$ codewords.
Puncturing this code on coordinate $n$ gives a code $C'$ of length
$n-1$, and the distances between distinct codewords in $C'$
are between $d_1$ and $\min\{d_2, n-1\}$, so that the entropy distance
of $C'$ is bounded below by either $\ef_{q,n-1}(d_1)$ or
$\ef_{q,n-1}(\min\{d_2, n-1\})$.
Therefore
\[
q^{-1}M\le D_q(n-1,\min\{\ef_{q,n-1}(d_1),
 \ef_{q,n-1}(\min\{d_2, n-1\})\}),
\]
and the proof is complete by letting $M=D_q(n,h)$.
\end{proof}

Now we shall derive several simple bounds on $D_q(n,h)$.
By convention, a lower bound $L(n,h)$ of $D_q(n,h)$ is said to
be achieved by some linear code $C$ if
$|C|\ge L(n,h)$ and $\ed(C)\ge h$.
If there is a family $\{C_i\}_{i=1}^\infty$ of linear codes
$C_i\subseteq \field{q}^{n_i}$
(supposing $n_i$ is strictly increasing in $i$) such that
\begin{subequations}
\label{eq:AsymptoticSense}
\begin{equation}
\liminf_{i\to\infty}
 \frac{\log_q|C_i|-\log_q L(n_i,n_i\bar{h})}{n_i}\ge 0
\end{equation}
and
\begin{equation}
\liminf_{i\to\infty}\frac{\ed(C_i)}{n_i}\ge \bar{h}
\end{equation}
\end{subequations}
for some $\bar{h}\in(0,1)$, then we say the lower bound
is asymptotically achieved by $\{C_i\}_{i=1}^\infty$.

The first is a lower bound, an analogue of the Gilbert bound
\cite{Huffman200300, Gilbert195205}.

\begin{theorem}
\label{th:Gilbert}
\begin{equation}
D_q(n,h) \ge \frac{q^n}{\sum_{i:\ef_{q,n}(i)<h} \binom{n}{i}(q-1)^i}.
\label{eq:Gilbert}
\end{equation}
\end{theorem}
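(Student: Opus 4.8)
The plan is to prove this Gilbert-type lower bound by a greedy/maximal-code argument, exactly analogous to the classical Gilbert bound but with Hamming balls replaced by the ``entropy balls'' determined by the level sets of $\ef_{q,n}$. First I would fix notation: let
\[
V_h \eqdef \{\vt{x}\in\field{q}^n : \ew(\vt{x})<h\}
 = \{\vt{x}\in\field{q}^n : \ef_{q,n}(\wt(\vt{x}))<h\},
\]
so that $|V_h|=\sum_{i:\ef_{q,n}(i)<h}\binom{n}{i}(q-1)^i$, the denominator on the right-hand side of \eqref{eq:Gilbert}. The key observation is that, by Definition~\ref{def:ed1} and the fact that $\ew(\vt{x}-\vt{y})=\ed(\vt{x},\vt{y})$, two vectors $\vt{x},\vt{y}$ satisfy $\ed(\vt{x},\vt{y})<h$ precisely when $\vt{x}-\vt{y}\in V_h$. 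Thus $\ed(C)\ge h$ for a linear code $C$ is equivalent to the condition $C\cap V_h=\{\vt{0}\}$, since the nonzero differences of codewords are exactly the nonzero codewords.

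Next I would run the standard maximal-object argument, but on the level of linear codes rather than arbitrary sets, to respect linearity. The cleanest route is to build the code one coordinate of dimension at a time: start with $C_0=\{\vt{0}\}$, and given a linear code $C_j$ with $C_j\cap V_h=\{\vt{0}\}$, ask whether there is a vector $\vt{v}$ such that the enlarged code $C_{j+1}=C_j+\field{q}\vt{v}$ still meets $V_h$ only in $\vt{0}$. A vector $\vt{v}$ fails to extend the code exactly when some nonzero element of $C_j+\field{q}\vt{v}$ lies in $V_h$, i.e.\ when $a\vt{v}+\vt{c}\in V_h$ for some $a\in\field{q}$ and $\vt{c}\in C_j$ not both giving $\vt{0}$. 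I would count the ``forbidden'' vectors $\vt{v}$: each pair $(a,\vt{c})$ with $a\neq 0$ forbids the coset of vectors $\vt{v}$ with $a\vt{v}\in V_h-\vt{c}$, and the number of such $\vt{v}$ is at most $|C_j|\cdot|V_h|$ after accounting for the $a\in\nzfield{q}$ scaling (the scaling merely permutes $\field{q}^n$ and preserves $|V_h|$, so it does not increase the count beyond $|C_j|\,|V_h|$). Hence as long as $|C_j|\,|V_h|<q^n$ there remains an admissible $\vt{v}$, and the code can be enlarged.

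The conclusion then follows by iterating until no further extension is possible: the resulting maximal linear code $C$ satisfies $\ed(C)\ge h$ and $|C|\cdot|V_h|\ge q^n$, which rearranges to \eqref{eq:Gilbert}. I expect the main obstacle — or at least the only point demanding genuine care — to be the counting step that guarantees a new generator exists while preserving \emph{linearity}: one must verify that forbidding all $\vt{v}$ arising from nonzero $(a,\vt{c})$ still leaves a vector outside $C_j$ itself, and that a chosen $\vt{v}\notin C_j$ genuinely increases the dimension by one. A subtle secondary point is the interaction with the $q\ge 2$ scaling in $\ew(a\vt{x})=|a|\ew(\vt{x})$ from Proposition~\ref{pr:ED.property}; because $V_h$ need not be closed under scalar multiplication when $q>2$, I would be careful to run the counting over all nonzero scalars $a$ rather than assuming $V_h$ is a union of lines. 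Once the bookkeeping is pinned down, the inequality is immediate, and no optimization of constants is needed since the bound matches the classical Gilbert bound verbatim with $V_h$ in place of the Hamming ball.
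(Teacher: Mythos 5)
Your overall strategy is exactly the paper's: the paper takes $B \eqdef \{\vt{x}\in\field{q}^n : \ew(\vt{x}) < h\}$ (your $V_h$), passes to a maximal linear code $C$ with $C \cap B = \{\vt{0}\}$, and concludes via a one-step extension argument (packaged as Lemma~\ref{le:Max.Sep}) that $\bigcup_{\vt{c}\in C}(\vt{c}+B) = \field{q}^n$, whence $|C|\,|B| \ge q^n$; your greedy dimension-by-dimension construction is the same argument unrolled.

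However, the one step you yourself flag as delicate contains a genuine error, and your proposed hedge would fail to prove the stated bound. You assert that the forbidden vectors number at most $|C_j|\,|V_h|$ ``after accounting for the scaling,'' but the justification you offer (nonzero scaling permutes $\field{q}^n$ and preserves $|V_h|$) only shows that each set $a^{-1}(V_h-\vt{c})$ has cardinality $|V_h|$; there are $(q-1)|C_j|$ such sets, so without more you only get $(q-1)|C_j|\,|V_h|$ forbidden vectors, and carrying out your fallback plan (``run the counting over all nonzero scalars $a$ rather than assuming $V_h$ is a union of lines'') yields only $D_q(n,h) \ge q^n/\bigl((q-1)|V_h|\bigr)$, a factor $q-1$ short of \eqref{eq:Gilbert}. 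The fact you doubt is in fact true and is exactly what rescues the count: $V_h$ \emph{is} closed under scalar multiplication for every $q$, since $\wt(a\vt{x}) = \wt(\vt{x})$ for all $a \in \nzfield{q}$ (a nonzero scalar does not change the support), i.e., $\ew(a\vt{x}) = |a|\,\ew(\vt{x})$ with $|a|\in\{0,1\}$ as in Proposition~\ref{pr:ED.property}; the paper records precisely this as ``$aB \subseteq B$ for $a\in\field{q}$,'' and it is the hypothesis of Lemma~\ref{le:Max.Sep}. With $a^{-1}V_h = V_h$ and $a^{-1}\vt{c}$ ranging over $C_j$ (since $C_j$ is a subspace), the forbidden set collapses to $\bigcup_{\vt{c}\in C_j}(V_h+\vt{c}) = C_j + V_h$, of size at most $|C_j|\,|V_h|$, and your argument then goes through verbatim; moreover $\vt{0}\in V_h$ (for $h>0$) gives $C_j \subseteq C_j + V_h$, which settles your secondary worry that an admissible $\vt{v}$ genuinely increases the dimension.
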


\begin{proof}
Let $B\eqdef\{\vt{x}\in\field{q}^n: \ew(\vt{x})<h\}$.
It is clear that $aB\subseteq B$ for $a\in\field{q}$.
Let $C$ be a maximal linear code in the sense that
$\ed(C)\ge h$ and any larger linear code containing $C$
has entropy distance less than $h$.
Then By Lemma~\ref{le:Max.Sep},
$C$ is a maximal $B$-separable subspace of $\field{q}^n$,
and it satisfies $\bigcup_{\vt{c}\in C} (\vt{c}+B)=\field{q}^n$,
so that $|C| \ge q^n/|B|$,
which establishes the theorem.
\end{proof}

Just like the Gilbert bound, it is difficult to construct long codes
achieving \eqref{eq:Gilbert}.
But at least, we know that regular low-density parity-check codes
(with the row weight of parity-check matrix
being the logarithm of code length%
\footnote{In the binary case, the row weight must be odd.})
can achieve \eqref{eq:Gilbert} asymptotically,
an easy consequence of the analysis of weight distribution of
LDPC codes (see e.g., \cite[Theorem~5.6 and Remark~5.7]{Yang201111}).

The second is an upper bound, a simple modification of
the Hamming bound (see e.g., \cite[Theorem~1.12.1]{Huffman200300}).

\begin{theorem}
\label{th:Hamming}
\begin{equation}
D_q(n,h) \le \frac{q^n}{(1+1\{q=2\})\sum_{i=0}^t \binom{n}{i}(q-1)^i},
\label{eq:Hamming}
\end{equation}
where $t=\ceil{d/2}-1$ and $d$ is the smallest integer such that
$\ef_{n,q}(d)\ge h$.
\end{theorem}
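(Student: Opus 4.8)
The plan is to extract a genuine minimum Hamming distance from the entropy-distance hypothesis, apply the classical sphere-packing (Hamming) bound, and then, in the binary case, exploit the symmetry $\binom{n}{i}=\binom{n}{n-i}$ to pack twice as many balls and gain the extra factor.

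First I would fix a linear code $C$ over $\field{q}$ of length $n$ with $\ed(C)\ge h$ and $|C|=D_q(n,h)$, and assume $h>0$ (otherwise $d=0$, $t=-1$, the sum is empty, and the right-hand side is $+\infty$, so the bound is vacuous). By definition $\ed(C)\ge h$ means every nonzero $\vt{c}\in C$ has $\ew(\vt{c})=\ef_{q,n}(\wt(\vt{c}))\ge h$; since $d$ is the smallest integer with $\ef_{q,n}(d)\ge h$, this forces $\wt(\vt{c})\ge d$, i.e.\ $\hd(C)\ge d$. Now the ordinary Hamming bound applies: with $t=\ceil{d/2}-1$ one checks $d\ge 2t+1$ (for $d$ even, $2t+1=d-1$; for $d$ odd, $2t+1=d$), so by the triangle inequality the balls $\{\vt{x}\in\field{q}^n:\hd(\vt{x},\vt{c})\le t\}$ for $\vt{c}\in C$ are pairwise disjoint, each containing $\sum_{i=0}^t\binom{n}{i}(q-1)^i$ vectors. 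Comparing the total count with $|\field{q}^n|=q^n$ yields \eqref{eq:Hamming} with the factor $1$, which settles the case $q\ge 3$.

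For $q=2$ I would recover the extra factor $2$ from the symmetry $\ef_{2,n}(i)=\ef_{2,n}(n-i)$, so that the weight set $\{i:\ef_{2,n}(i)\ge h\}$ is symmetric about $n/2$ and its maximum equals $n-d$; hence every nonzero codeword has weight in $[d,n-d]$. Since $h>0$ and $\ew(\vt{1})=\ef_{2,n}(n)=0$, the all-one vector $\vt{1}$ does not lie in $C$, so $C'\eqdef C\cup(C+\vt{1})$ has exactly $2|C|$ elements. The key computation is the cross distance: for distinct $\vt{c}_1,\vt{c}_2\in C$ we have $\hd(\vt{c}_1,\vt{c}_2+\vt{1})=\wt(\vt{c}_1+\vt{c}_2+\vt{1})=n-\wt(\vt{c}_1+\vt{c}_2)\in[d,n-d]$, using that $\vt{c}_1+\vt{c}_2$ is a nonzero codeword and the upper weight bound $n-d$; moreover $\hd(\vt{c},\vt{c}+\vt{1})=n\ge d$. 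Combined with $\hd(C)\ge d$, this shows $C'$ has minimum distance at least $d\ge 2t+1$, so the $2|C|$ balls of radius $t$ about the points of $C'$ are pairwise disjoint, giving $2|C|\sum_{i=0}^t\binom{n}{i}\le 2^n$, which is exactly \eqref{eq:Hamming} for $q=2$.

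The delicate points I anticipate are bookkeeping rather than deep: verifying the identity $\ceil{d/2}-1=\floor{(d-1)/2}$ so that $\hd(C)\ge d$ really guarantees disjoint radius-$t$ balls, and—where the binary improvement lives—checking that the upper weight constraint $\wt(\vt{c})\le n-d$ survives to bound the cross distances $n-\wt(\vt{c}_1+\vt{c}_2)$ below by $d$. This last step is the one genuine obstacle and also explains why the factor cannot be improved for $q\ge 3$: there $\ef_{q,n}$ is not symmetric about $n/2$, no upper weight bound of the form $n-d$ is available, and the coset $C+\vt{1}$ carries no comparable structure, so only the single-packing argument goes through.
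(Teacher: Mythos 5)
Your proof is correct and is essentially the paper's own argument in dual form: where the paper packs translates $\vt{c}+(B_1\cup B_2)$ of a doubled ball (the radius-$t$ balls about $\vt{0}$ and $\vt{1}$) over $\vt{c}\in C$, you pack ordinary radius-$t$ balls about the doubled code $C\cup(C+\vt{1})$ --- exactly the same family of $2|C|$ disjoint balls, with your cross-distance computation $\hd(\vt{c}_1,\vt{c}_2+\vt{1})=n-\wt(\vt{c}_1+\vt{c}_2)\ge d$ replaying the paper's four weight inequalities. Both arguments rest on the identical key fact that every nonzero codeword has weight in $[d,n-d]$, obtained from the symmetry $\ef_{2,n}(i)=\ef_{2,n}(n-i)$ together with the minimality of $d$.
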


\begin{proof}
Since the case of $q\ge 3$ is the same as
the original Hamming bound, we only prove the case of $q=2$.
Let $C$ be a (linear) code of length $n$ and entropy distance $h$.
Then by definition, the weight of all nonzero codewords
is between $d$ and $n-d$.
Let
\[
B_1\eqdef\{\vt{x}\in\field{q}^n: \wt(\vt{x})\le t\}
\]
and
\[
B_2\eqdef\{\vt{x}\in\field{q}^n: \wt(\vt{x})\ge n-t\}.
\]
Then for any $\vt{x}\in B_1$, $\vt{y}\in B_2$, and $\vt{c}\in C$,
we have
\[
\wt(\vt{x}-\vt{c}) \ge \wt(\vt{c})-\wt(\vt{x}) \ge d-t > t,
\]
\[
\wt(\vt{x}-\vt{c}) \le \wt(\vt{x})+\wt(\vt{c}) \le t+n-d < n-t,
\]
\[
\wt(\vt{y}-\vt{c}) \ge \wt(\vt{y})-\wt(\vt{c}) \ge n-t-(n-d) > t,
\]
\[
\wt(\vt{y}-\vt{c}) \le \wt(\vt{y}-\vt{1})+\wt(\vt{1}-\vt{c})
\le t+n-d < n-t.
\]
This implies that the family $\{\vt{c}+B\}_{\vt{c}\in C}$ of sets
with $B\eqdef B_1\cup B_2$ is pairwise disjoint, so that
$|B||C|\le q^n$, which establishes the theorem.
\end{proof}

The third is also an upper bound, an analogue of the Singleton bound
\cite{Singleton196404}.

\begin{theorem}
\label{th:Singleton}
Let $[d_1, d_2]=\ef_{n,q}^{-1}([h,n))$.
Then
\begin{equation}
D_q(n,h)\le q^{\min\{n-d_1+1,d_2\}}.\label{eq:Singleton}
\end{equation}
\end{theorem}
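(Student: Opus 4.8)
The plan is to read off from the hypothesis $\ed(C)\ge h$ that every nonzero codeword of $C$ has Hamming weight confined to the integer interval $[d_1,d_2]$, and then to derive two independent size bounds, one governed by the lower end $d_1$ and one by the upper end $d_2$, whose minimum is exactly the exponent in \eqref{eq:Singleton}.

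First I would fix a linear code $C$ of length $n$ with $\ed(C)\ge h$ and write $k=\dim C$, so $|C|=q^k$. By the definition of entropy distance and the unimodality of $\ef_{q,n}$ from Proposition~\ref{pr:h.property}\ref{enu:h.max}, the superlevel set $\{i:\ef_{q,n}(i)\ge h\}$ is precisely the interval $\ef_{n,q}^{-1}([h,n))=[d_1,d_2]$ (recall $\ef_{q,n}(i)<n$ always, by Proposition~\ref{pr:h.property}\ref{enu:h.range}). Since $\ed(C)=\min_{\vt{c}\ne\vt{0}}\ef_{q,n}(\wt(\vt{c}))\ge h$, this forces $\wt(\vt{c})\in[d_1,d_2]$ for every nonzero $\vt{c}\in C$; in particular the minimum distance of $C$ is at least $d_1$, and no codeword has weight exceeding $d_2$.

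The first bound is the classical Singleton argument applied at the lower end. Since distinct codewords differ in at least $d_1$ positions, deleting any $d_1-1$ coordinates keeps all $q^k$ codewords pairwise distinct, so $q^k\le q^{\,n-(d_1-1)}$, i.e. $|C|\le q^{\,n-d_1+1}$. For the second bound I would exhibit a nonzero codeword of weight at least $k$: picking $k$ linearly independent columns of a generator matrix $\mat{G}$, they form an invertible $k\times k$ submatrix $\mat{M}$, and the codeword $\vt{c}=\vt{1}\mat{M}^{-1}\mat{G}$ equals $\vt{1}$ on those $k$ coordinates, hence $\vt{c}\ne\vt{0}$ and $\wt(\vt{c})\ge k$. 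Combined with the upper-end constraint $\wt(\vt{c})\le d_2$ this gives $k\le d_2$, so $|C|\le q^{d_2}$. Taking $C$ to attain $D_q(n,h)$ and intersecting the two bounds yields $D_q(n,h)\le q^{\min\{n-d_1+1,\,d_2\}}$, which is \eqref{eq:Singleton}.

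The main obstacle, and the only step departing from the textbook Singleton bound, is the second inequality $|C|\le q^{d_2}$. It rests on the "dual" observation that a code of dimension $k$ necessarily contains a codeword of weight at least $k$ (extracted from an information set, i.e. a systematic form of $\mat{G}$), in contrast to the familiar small-weight codeword used for the minimum-distance side. Once that observation is in hand, both halves are routine and the conclusion follows immediately.
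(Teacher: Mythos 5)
Your proof is correct and takes essentially the same route as the paper: the classical Singleton bound supplies the $q^{n-d_1+1}$ half, and your information-set argument producing a codeword of weight at least $k$ (hence $k\le d_2$) is precisely what the paper means by its terse appeal to the standard form $\begin{pmatrix}\idm_k &\mat{A}\end{pmatrix}$ of the generator matrix, only spelled out. Your use of the invertible submatrix $\mat{M}$ merely replaces the implicit coordinate permutation, so the difference is purely presentational.
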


\begin{proof}
By the Singleton bound, it suffices to show that
$D_q(n,h)\le q^{d_2}$, which is obviously true by considering
the standard form
$\begin{pmatrix}\idm_k &\mat{A}\end{pmatrix}$
of generator matrix of an $[n,k]$ linear code.
\end{proof}

For illustration, we compute in Table~\ref{tab:LB.[7,k]}
the lower and upper bounds of the largest entropy distance of
a $[7,k]$ binary linear code for $1\le k\le 6$
as well as examples achieving the lower bound.
\begin{table*}[htbp]
\footnotesize
\begin{center}
\caption{The lower and upper bounds of the largest entropy distance of
 a $[7,k]$ binary linear code}\label{tab:LB.[7,k]}
\renewcommand{\arraystretch}{1.2}
\begin{tabular}{ccccc}
 \hline
 $k$ &The lower bound
  &The upper bound
  &Examples (generator matrix) &Entropy distance\\[-0.5ex]
 &(by \eqref{eq:Gilbert})
  &(by \eqref{eq:Hamming}, \eqref{eq:Singleton},
  and \eqref{eq:Special.1})\\
 \hline
 $1$ &$\log_2\binom{7}{3}$ &$\log_2\binom{7}{3}$
  &$\begin{pmatrix}1 &1 &1 &0 &0 &0 &0\end{pmatrix}$
  &$\log_2\binom{7}{3}$\\
 $2$ &$\log_2\binom{7}{2}$ &$\log_2\binom{7}{3}$
  &$\begin{pmatrix}
   1 &1 &1 &0 &0 &0 &0\\
   1 &0 &0 &1 &1 &0 &0
   \end{pmatrix}$
  &$\log_2\binom{7}{3}$\\[1.5ex]
 $3$ &$\log_2\binom{7}{2}$ &$\log_2\binom{7}{3}$
  &$\begin{pmatrix}
   1 &0 &1 &0 &1 &0 &1\\
   0 &1 &1 &0 &0 &1 &1\\
   0 &0 &0 &1 &1 &1 &1
   \end{pmatrix}$
  &$\log_2\binom{7}{3}$ (cf. Example~\ref{ex:simpcode})\\[2ex]
 $4$ &$\log_2\binom{7}{1}$ &$\log_2\binom{7}{2}$
  &$\begin{pmatrix}
   1 &0 &1 &0 &1 &0 &1\\
   0 &1 &1 &0 &0 &1 &1\\
   0 &0 &0 &1 &1 &1 &0\\
   0 &0 &1 &0 &0 &1 &0
   \end{pmatrix}$
  &$\log_2\binom{7}{2}$\\
 $5$ &$\log_2\binom{7}{1}$ &$\log_2\binom{7}{1}$
  &$\begin{pmatrix}
   \idm_{5} &\transpose{\vt{0}} &\transpose{\vt{0}}
   \end{pmatrix}$
  &$\log_2\binom{7}{1}$\\
 $6$ &$\log_2\binom{7}{1}$ &$\log_2\binom{7}{1}$
  &$\begin{pmatrix}
   \idm_{6} &\transpose{\vt{0}}
   \end{pmatrix}$
  &$\log_2\binom{7}{1}$\\
 \hline
\end{tabular}
\end{center}
\end{table*}

We close this section with a result on $D_2(n,\ef_{2,n}(2))$.

\begin{theorem}
For $n\ge 4$,
\begin{equation}
D_2(n,\ef_{2,n}(2)) =\begin{cases}
2^{n-3} &if $n$ is odd,\\
2^{n-2} &if $n$ is even.
\end{cases}
\label{eq:Special.1}
\end{equation}
\end{theorem}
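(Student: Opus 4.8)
The plan is to translate the entropy-distance hypothesis into a plain Hamming-weight condition, and then solve the resulting extremal problem with a parity-check argument. Since for $q=2$ we have $\ef_{2,n}(i)=\log_2\binom{n}{i}$, Proposition~\ref{pr:h.property}\ref{enu:h.max} tells us that $\binom{n}{i}$ increases up to the middle and decreases afterward, with $\binom{n}{i}=\binom{n}{n-i}$. Using $n\ge 4$, so that $\binom{n}{1}=\binom{n}{n-1}=n<\binom{n}{2}$ and $\binom{n}{0}=\binom{n}{n}=1<\binom{n}{2}$, one gets $\ef_{2,n}(i)\ge\ef_{2,n}(2)$ if and only if $2\le i\le n-2$. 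Hence a binary linear code $C$ satisfies $\ed(C)\ge\ef_{2,n}(2)$ exactly when every nonzero codeword has weight in $[2,n-2]$, i.e.\ when $C$ contains no word of weight $1$, no word of weight $n-1$, and not the all-ones word $\vt{1}$.

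Next I would pass to a full-rank parity-check matrix $H$ with $C=\ker H$. The three forbidden features read off cleanly: $\vt{e}_i\in C$ iff column $i$ of $H$ is zero; the weight-$(n-1)$ word $\vt{1}+\vt{e}_i\in C$ iff column $i$ equals the total column sum $\vt{s}:=H\transpose{\vt{1}}$; and $\vt{1}\in C$ iff $\vt{s}=\vt{0}$. So the hypothesis is equivalent to: $H$ has no zero column, $\vt{s}\ne\vt{0}$, and no column of $H$ equals $\vt{s}$. For the upper bound, codimension $0$ forces $\vt{e}_1\in C$, and codimension $1$ forces the single row to be $\vt{1}$ (no zero column), making $C$ the even-weight code, which contains $\vt{1}$ when $n$ is even and a weight-$(n-1)$ word when $n$ is odd; both are excluded. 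Thus $\dim C\le n-2$ always, giving $D_2(n,\ef_{2,n}(2))\le 2^{n-2}$. For odd $n$ I would further exclude codimension $2$: then $H$ has rank $2$, so its columns use at least two of the three nonzero vectors of $\field{2}^2$; since $\vt{s}$ must be nonzero and differ from every used column, exactly two types $a,b$ occur and $\vt{s}$ must equal the third nonzero vector $a+b$. As $a,b$ are independent and $\vt{s}\equiv m_a\,a+m_b\,b\pmod 2$ (with $m_a,m_b$ the column multiplicities), this forces $m_a,m_b$ both odd, whence $n=m_a+m_b$ is even, a contradiction. Therefore $\dim C\le n-3$ and $D_2(n,\ef_{2,n}(2))\le 2^{n-3}$ for odd $n$.

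For the matching constructions I would exhibit codes meeting these bounds. For $n$ even, take $C=\{\vt{x}:x_1=0,\ \sum_{i=2}^{n}x_i=0\}$, of codimension $2$: its nonzero words have even weight among the $n-1$ coordinates $2,\ldots,n$ with $x_1=0$, so their weights lie in $\{2,4,\ldots,n-2\}\subseteq[2,n-2]$ (the top value $n-1$ is odd, hence excluded). For $n$ odd, take $C=\{\vt{x}:x_1=x_n=0,\ \sum_{i=2}^{n-1}x_i=0\}$, of codimension $3$, whose nonzero weights lie in $\{2,4,\ldots,n-3\}\subseteq[2,n-2]$. In both cases a weight-$2$ word exists, so $\ed(C)=\ef_{2,n}(2)$, and the sizes are $2^{n-2}$ and $2^{n-3}$ respectively, which together with the upper bounds yields \eqref{eq:Special.1}.

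The routine parts are the weight reformulation, the codimension $\le 1$ exclusions, and the two constructions, all of which drop out mechanically. The main obstacle is the odd-case upper bound, i.e.\ showing that no codimension-$2$ code works: that is precisely where the parity argument on the two column multiplicities $m_a,m_b$ is essential, and it is the only place where the parity of $n$ enters and produces the split between $2^{n-2}$ and $2^{n-3}$.
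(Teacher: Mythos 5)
Your proof is correct, and its skeleton matches the paper's: establish the general bound $2^{n-2}$, rule out codimension $2$ for odd $n$ by a parity argument, and exhibit matching constructions (your codes are, up to coordinate permutation, exactly the paper's generator matrices $\begin{pmatrix}\idm_{n-2} &\transpose{\vt{1}} &\transpose{\vt{0}}\end{pmatrix}$ and $\begin{pmatrix}\idm_{n-3} &\transpose{\vt{1}} &\transpose{\vt{0}} &\transpose{\vt{0}}\end{pmatrix}$). Where you genuinely diverge is in the execution of the two upper-bound steps. For $2^{n-2}$ the paper simply invokes its Singleton analogue (Theorem~\ref{th:Singleton}, with $[d_1,d_2]=[2,n-2]$ giving $\min\{n-1,n-2\}=n-2$), whereas you re-derive it from scratch by excluding codimensions $0$ and $1$; this is more self-contained but reproves something already available. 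For the crucial odd-case exclusion, the paper works in the primal picture: it takes the standard-form generator $\begin{pmatrix}\idm_{n-2} &\transpose{\vt{v}}_1 &\transpose{\vt{v}}_2\end{pmatrix}$, uses the codeword $\vt{1}\mat{G}$ to force $\vt{v}_1,\vt{v}_2$ to have even weight, uses the basis codewords $\vt{e}_k\mat{G}$ to force $(v_{1,k},v_{2,k})\ne(0,0)$, extracts a position $k_0$ with $v_{1,k_0}=v_{2,k_0}=1$ from the parity of $n-2$, and exhibits the forbidden weight-$(n-1)$ codeword $(\vt{1}-\vt{e}_{k_0})\mat{G}$. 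You instead work in the dual picture: the three forbidden weights translate into ``no zero column, $\vt{s}\ne\vt{0}$, no column equals $\vt{s}$'' for the $2\times n$ parity-check matrix, after which classifying columns among the three nonzero vectors of $\field{2}^2$ forces $\vt{s}=a+b$ with both multiplicities $m_a,m_b$ odd, hence $n$ even. The two parity obstructions are dual formulations of the same phenomenon (the paper's appended columns $(v_{1,k},v_{2,k})$ are precisely your parity-check columns), but your version avoids the standard-form normalization and the ad hoc choice of witnessing codewords, and it makes transparent exactly where the parity of $n$ enters --- a cleaner and slightly more symmetric packaging of the same idea.
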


\begin{proof}
The even case can be easily proved by
Theorem~\ref{th:Singleton} and the generator matrix
$\begin{pmatrix}
\idm_{n-2} &\transpose{\vt{1}} &\transpose{\vt{0}}
\end{pmatrix}$.

As for the odd case,
by Theorem~\ref{th:Singleton}, it suffices to show that
$D_2(n,\ef_{2,n}(2)) \ne 2^{n-2}$ and provide one example
of $[n,n-3]$ linear code of entropy distance at least $\ef_{2,n}(2)$.

We first show that $D_2(n,\ef_{2,n}(2)) \ne 2^{n-2}$.
If it were false, then there would exist an $[n,n-2]$ linear code $C$
of entropy distance at least $\ef_{2,n}(2)$.
Let
\[
\mat{G}\eqdef\begin{pmatrix}
\idm_{n-2} &\transpose{\vt{v}}_1 &\transpose{\vt{v}}_2
\end{pmatrix}
\]
be the standard form of the generator matrix with
$\vt{v}_1, \vt{v}_2\in \field{2}^{n-2}$.
Since the weight of the codeword $\vt{1}\mat{G}$ must not be
greater than $n-2$, we have $\vt{1}\mat{G}=(1,\ldots,1, 0, 0)$,
so that $\vt{v}_1$ and $\vt{v}_2$ must contain even number of ones.
Next, let $\{\vt{e}_k\}_{k=1}^{n-2}$ be the standard basis of
$\field{2}^{n-2}$.
Then the weight of the codeword
$\vt{e}_k\mat{G}=(\vt{e}_k, v_{1,k}, v_{2,k})$
must not be less than two,
so either $v_{1,k}$ or $v_{2,k}$ or both are one.
Because $n-2$ is odd, there would exist $k_0$ such that
$v_{1,k_0} = v_{2,k_0} = 1$.
However, the weight of $(\vt{1}-\vt{e}_{k_0}) \mat{G}$ would be
$(n-3)+2=n-1$, which is absurd.

For an example of $[n,n-3]$ linear code
of entropy distance at least $\ef_{2,n}(2)$,
consider the generator matrix
$\begin{pmatrix}
\idm_{n-3} &\transpose{\vt{1}} &\transpose{\vt{0}} &\transpose{\vt{0}}
\end{pmatrix}$,
which clearly has entropy distance $\ef_{2,n}(2)$.
\end{proof}

\section{Entropy Distance of Linear Encoders}
\label{sec:LE.ED}

In this section we shall define entropy distance in a more general
space, the direct product of two vector spaces.
In particular, we shall define and study the entropy distance of
a linear encoder.

Let $\field{q}^k\times\field{q}^n$ denote the direct product
of $\field{q}^k$ and $\field{q}^n$.
A vector $\vt{x}\in\field{q}^k\times\field{q}^n$
is written as
\[
(\vt{x}_1,\vt{x}_2)
\eqdef (x_{1,1}, \ldots, x_{1,k}, x_{2,1}, \ldots, x_{2,n}).
\]
For an all-$c$ vector in $\field{q}^k\times\field{q}^n$
with $c\in\field{q}$, we still write $\vt{c}$ for short.
A linear encoder $f:\field{q}^k\to\field{q}^n$
is a linear transformation from $\field{q}^k$ to $\field{q}^n$.
The rate of $f$ is defined to be $k/n$.
Usually $f$ is identified with its associated $k\times n$
transformation matrix, which is called generator matrix
in coding theory.
A linear encoder is said to be of full rank if its
generator matrix is of full rank.
A full-rank linear encoder is necessary for efficient information
processing because the full-rank condition ensures that
no information is lost during encoding (injective for $k\le n$)
or no vectors in the output vector space are wasted
(surjective for $k\ge n$).

\begin{definition}
\label{def:ed2}
The \emph{entropy distance} $\ed(\vt{x}, \vt{y})$ between
$\vt{x}, \vt{y} \in \field{q}^k\times\field{q}^n$
is defined by
\[
\ed(\vt{x},\vt{y}) \eqdef
\ed(\vt{x}_1,\vt{y}_1)+\ed(\vt{x}_2,\vt{y}_2).
\]
Likewise, the \emph{entropy weight} $\ew(\vt{x})$ of
$\vt{x}\in\field{q}^k\times\field{q}^n$
is defined by
\[
\ew(\vt{x}) \eqdef \ew(\vt{x}_1)+\ew(\vt{x}_2)
= \ed(\vt{x},\vt{0}).
\]
The \emph{entropy distance} $\ed(V)$ of a subspace $V$ of
$\field{q}^k\times\field{q}^n$ is defined to be the smallest
entropy distance between distinct vectors in $V$, or equivalently,
the minimum entropy weight of nonzero vectors in $V$.
Then the \emph{entropy distance} $\ed(f)$ of a linear encoder
$f:\field{q}^k\to\field{q}^n$ is defined to be the entropy distance
of its graph $\{(\vt{x}_1,f(\vt{x}_1)): \vt{x}_1\in\field{q}^k\}$.
\end{definition}

By Proposition~\ref{pr:ED.property},
it is easy to verify that the entropy distance in
$\field{q}^k\times\field{q}^n$ is a metric for $q\ge 3$
and a pseudometric for $q=2$.
The idea of Definition~\ref{def:ed2} comes from the author's
work on lossless joint source channel coding
\cite{Yang200904, Yang.M1}.
In a (distributed) lossless joint source-channel coding scheme based
on linear encoders, the sources are typically nonuniform (and
correlated), and hence the output of a linear encoder for small-weight
or small-entropy-weight input vectors is very important.
In other words, even for the same linear code, different generator
matrices may have very different performance.
It is found that a linear encoder that is (universally) good
(in the scheme proposed by \cite{Yang200904})
maps vectors of small entropy weight to
vectors of large entropy weight, an important property
now characterized by the entropy distance of a linear encoder.

Next, we study the lower and upper bounds on
the largest entropy distance of a full-rank linear encoder.
We denote by $E_q(k,n)$ the largest entropy distance of
a full-rank linear encoder $f:\field{q}^k\to\field{q}^n$.

Different from the entropy distance of a linear code, the entropy
distance of a linear encoder $f:\field{q}^k\to\field{q}^n$
has a very simple and tight upper bound:

\begin{theorem}
\label{th:encoder.ed.ub}
\begin{equation}
E_q(k,n) \le \begin{cases}
\ef_{2,n}(\ceil{\frac{n-1}{2}}) &if $q=2$,\\
\ef_{q,k}(1) + \ef_{q,n}\left(\ceil{\frac{(q-1)n-1}{q}}\right)
 &otherwise.
\end{cases}
\label{eq:encoder.ed.ub}
\end{equation}
\end{theorem}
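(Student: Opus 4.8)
The plan is to upper-bound $\ed(f)$ for an arbitrary full-rank encoder $f$ by exhibiting a single nonzero input vector whose graph image already has entropy weight at most the stated bound. By Definition~\ref{def:ed2}, $\ed(f)$ is the minimum entropy weight over all nonzero vectors of the graph $\{(\vt{x}_1, f(\vt{x}_1)): \vt{x}_1\in\field{q}^k\}$, and this entropy weight splits as $\ew(\vt{x}_1)+\ew(f(\vt{x}_1)) = \ef_{q,k}(\wt(\vt{x}_1)) + \ef_{q,n}(\wt(f(\vt{x}_1)))$. Hence it suffices to pick one convenient $\vt{x}_1\ne\vt{0}$ and bound each summand. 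The output summand comes for free: by part~\ref{enu:h.max} of Proposition~\ref{pr:h.property}, $\ef_{q,n}$ attains its maximum at $i=\ceil{[(q-1)n-1]/q}$, so $\ef_{q,n}(\wt(f(\vt{x}_1)))\le\ef_{q,n}(\ceil{[(q-1)n-1]/q})$ regardless of the value of $f(\vt{x}_1)$. All the work thus reduces to making the input summand $\ef_{q,k}(\wt(\vt{x}_1))$ match the theorem.

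For $q\ge 3$ I would take $\vt{x}_1$ to be any weight-one vector, say the first standard basis vector; then the input summand is exactly $\ef_{q,k}(1)$, and combining with the output bound yields $\ed(f)\le\ef_{q,k}(1)+\ef_{q,n}(\ceil{[(q-1)n-1]/q})$, the second branch of \eqref{eq:encoder.ed.ub}. For $q=2$ the first branch has no input term, and the reason is the binary-specific degeneracy recorded in Proposition~\ref{pr:ED.property}: the all-one vector has zero entropy weight. Concretely, choosing $\vt{x}_1=\vt{1}$ (nonzero since $k\ge 1$) gives $\ew(\vt{1})=\ef_{2,k}(k)=\log_2\binom{k}{k}=0$, so the input summand vanishes and $\ed(f)\le\ef_{2,n}(\ceil{(n-1)/2})$; this matches the first branch because $[(q-1)n-1]/q=(n-1)/2$ when $q=2$.

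Since these inequalities hold for every full-rank $f$ (in fact for every linear encoder), taking the maximum over full-rank encoders gives the claimed bound on $E_q(k,n)$. The argument is essentially immediate once the two witnesses are identified, so there is no genuine obstacle; the only points requiring care are (i) charging the output coordinate at the peak value of $\ef_{q,n}$, whose location is supplied by part~\ref{enu:h.max} of Proposition~\ref{pr:h.property}, and (ii) observing that in the binary case the all-one input carries zero entropy weight, which is exactly what deletes the $\ef_{2,k}(1)$ term and accounts for the difference between the two cases. Full rank plays no role in this upper bound; it will enter only when establishing the matching lower bound that makes the estimate tight.
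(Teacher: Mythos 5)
Your proof is correct, and since the paper states ``The proof is left to the reader,'' yours is evidently the intended argument: evaluate the graph's entropy weight at a single witness input ($\vt{x}_1=\vt{1}$ for $q=2$, exploiting $\ef_{2,k}(k)=0$; a weight-one vector for $q\ge 3$) and charge the output coordinate at the peak of $\ef_{q,n}$ located by Proposition~\ref{pr:h.property}\ref{enu:h.max}. Your closing remarks --- that full rank is irrelevant to the upper bound and that the binary degeneracy of the all-one vector is what deletes the $\ef_{q,k}(1)$ term --- are both accurate.
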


The proof is left to the reader.

The (asymptotic) tightness of the upper bound is ensured by
the following lower bound.

\begin{theorem}
\label{th:encoder.ed.lb}
\begin{equation}
E_q(k,n) \ge h_0 \eqdef \max\left\{h:
 \sum_{\substack{i\ge 1,j\ge 1\{k\le n\}\\
  \ef_{q,k}(i)+\ef_{q,n}(j)<h}}
 \binom{k}{i}\binom{n}{j}(q-1)^{i+j} < (q-1)(q^n-q^{k'-1})\right\},
\label{eq:encoder.ed.lb}
\end{equation}
where $k'\eqdef\min\{k,n\}$.
\end{theorem}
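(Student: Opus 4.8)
The plan is to build a generator matrix $(\idm_k \mid \mat{G})$ of the graph greedily, one row at a time, in the spirit of the Gilbert--Varshamov argument behind Theorem~\ref{th:Gilbert}. Write the rows as $\vt{r}_m \eqdef (\vt{e}_m, \vt{g}_m)$, where $\vt{g}_m\in\field{q}^n$ is the $m$th row of $\mat{G}$. Every nonzero vector of the graph is $\lambda\vt{r}_m + \vt{w}$ for a unique largest index $m$ with $\lambda\neq 0$ and some $\vt{w}$ in the span of $\vt{r}_1,\ldots,\vt{r}_{m-1}$. Since entropy weight is invariant under nonzero scaling (Proposition~\ref{pr:ED.property}(3), applied blockwise via Definition~\ref{def:ed2}), it suffices to control the vectors $\vt{r}_m + \vt{w}$. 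So if at each step $m$ I can choose $\vt{g}_m$ making $\vt{r}_m+\vt{w}$ have entropy weight at least $h_0$ for every $\vt{w}$ in the previously built span, then by induction on $m$ the finished graph has entropy distance at least $h_0$, and $f$ is full rank.

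Fix the step $m$ and write $\vt{w} = (\vt{a}, \vt{a}\mat{G}_{<m})$ with $\vt{a}\in\field{q}^{m-1}$ supported on $\{1,\ldots,m-1\}$ and $\mat{G}_{<m}$ the rows chosen so far. Then $\vt{r}_m+\vt{w} = (\vt{e}_m+\vt{a},\, \vt{g}_m + \vt{a}\mat{G}_{<m})$, whose first block has weight $i \eqdef 1+\wt(\vt{a})$, so its entropy weight equals $\ef_{q,k}(i) + \ef_{q,n}(\wt(\vt{g}_m+\vt{a}\mat{G}_{<m}))$. Thus $\vt{g}_m$ is ``bad'' for this $\vt{a}$ exactly when $\vt{g}_m + \vt{a}\mat{G}_{<m}$ lands in the forbidden set $B'(i)\eqdef\{\vt{y}\neq\vt{0} : \ef_{q,k}(i)+\ef_{q,n}(\wt(\vt{y}))<h_0\}$, of size $\sum_{j\ge 1:\,\ef_{q,k}(i)+\ef_{q,n}(j)<h_0}\binom{n}{j}(q-1)^j$. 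Because $\vt{g}_m\mapsto\vt{g}_m+\vt{a}\mat{G}_{<m}$ is a bijection of $\field{q}^n$, a union bound over the $q^{m-1}$ choices of $\vt{a}$ bounds the number of entropy-bad $\vt{g}_m$ by
\[
\sum_{\vt{a}\in\field{q}^{m-1}} |B'(1+\wt(\vt{a}))|
= \sum_{i=1}^{m}\binom{m-1}{i-1}(q-1)^{i-1}\,|B'(i)|.
\]
To enforce full rank I additionally require the first $k'=\min\{k,n\}$ rows $\vt{g}_1,\ldots,\vt{g}_{k'}$ to be linearly independent, which at a step $m\le k'$ rules out the $q^{m-1}$ vectors in the span of the earlier rows, and imposes nothing for $m>k'$.

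A valid $\vt{g}_m$ exists once the total forbidden count is strictly below $q^n$. For $m\le k'$ this count is at most $\sum_{i=1}^m\binom{m-1}{i-1}(q-1)^{i-1}|B'(i)| + q^{m-1}$; for $m>k'$ the span term disappears. Multiplying by $q-1$, the step-$m$ requirement becomes $\sum_{i=1}^m\binom{m-1}{i-1}(q-1)^i|B'(i)| < (q-1)(q^n-q^{m-1})$ when $m\le k'$, and the weaker $< (q-1)q^n$ when $m>k'$. Using $\binom{m-1}{i-1}\le\binom{k}{i}$ to enlarge the left side and $q^{m-1}\le q^{k'-1}$ (for $m\le k'$) to shrink the right side, each step is implied by the single inequality $\sum_{i\ge 1,\,j\ge 1:\,\ef_{q,k}(i)+\ef_{q,n}(j)<h_0}\binom{k}{i}\binom{n}{j}(q-1)^{i+j} < (q-1)(q^n-q^{k'-1})$, which is exactly the defining inequality of $h_0$ in \eqref{eq:encoder.ed.lb}. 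Hence the greedy construction runs to completion and yields a full-rank linear encoder of entropy distance at least $h_0$, so $E_q(k,n)\ge h_0$.

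I expect the only real subtlety to be this last bookkeeping: checking that the one clean inequality defining $h_0$ dominates the feasibility condition at every step $m$ in both regimes $k\le n$ and $k>n$, where the rank-binding step ($m=k'$) and the entropy-binding step ($m=k$) need not coincide. The replacement of $\binom{m-1}{i-1}$ by the cleaner $\binom{k}{i}$ is a deliberate, mild relaxation that makes the bound uniform across all steps; the factor $q-1$ and the term $q^{k'-1}$ trace back respectively to the scaling invariance of entropy weight and to the last rank-increasing row.
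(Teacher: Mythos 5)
Your construction is a genuinely different route from the paper's proof---the paper takes a maximal $B$-separable subspace of a padded product space $\field{q}^k\times\field{q}^n\times\field{q}^l$ (via Lemma~\ref{le:Max.Sep}) and extracts the encoder from the projections $\pi_1,\pi_{23}$ of that subspace---and for $k\le n$ your greedy row-by-row argument is sound. But there is a genuine gap in the regime $k>n$, and it sits exactly where you flagged the ``only real subtlety.'' Your forbidden set $B'(i)$ excludes $\vt{y}=\vt{0}$, so you never forbid the collisions $\vt{g}_m+\vt{a}\mat{G}_{<m}=\vt{0}$ on entropy grounds; for $m\le k'$ this is harmless because the rank condition independently excludes the whole span, but for $k>n$ and $m>k'=n$ you ``impose nothing,'' even though by then $\vt{g}_1,\ldots,\vt{g}_n$ form a basis of $\field{q}^n$, so for \emph{every} candidate $\vt{g}_m$ there are $q^{m-1-n}$ vectors $\vt{a}$ with $\vt{a}\mat{G}_{<m}=-\vt{g}_m$. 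Each such $\vt{a}$ puts the nonzero kernel vector $(\vt{e}_m+\vt{a},\vt{0})$ into the graph, with entropy weight $\ef_{q,k}(1+\wt(\vt{a}))+\ef_{q,n}(0)=\ef_{q,k}(1+\wt(\vt{a}))$, which your construction never guarantees to be at least $h_0$. This is precisely what the indicator in the theorem encodes: the summation constraint is $j\ge 1\{k\le n\}$, so for $k>n$ the defining sum deliberately includes the $j=0$ terms $\binom{k}{i}\binom{n}{0}(q-1)^i$; in your closing paragraph you restated the condition with $j\ge 1$ throughout, which is not the inequality in \eqref{eq:encoder.ed.lb}.

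The gap is fixable inside your framework: at steps $m>k'$, enlarge the forbidden set so that it contains $\vt{0}$ whenever $\ef_{q,k}(i)<h_0$ (each bad $\vt{a}$ then kills exactly one candidate, namely $\vt{g}_m=-\vt{a}\mat{G}_{<m}$). The union bound acquires the extra count $\sum_{i:\,\ef_{q,k}(i)<h_0}\binom{m-1}{i-1}(q-1)^{i-1}$, which after your $(q-1)$ normalization and the relaxation $\binom{m-1}{i-1}\le\binom{k}{i}$ is exactly the $j=0$ part of the theorem's sum, and since steps $m>k'$ only need the weaker requirement $<(q-1)q^n$, the inequality in \eqref{eq:encoder.ed.lb} as actually stated still dominates every step in both regimes. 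With that repair your argument is correct and arguably more constructive and elementary than the paper's covering argument, which in exchange yields its conclusion non-constructively via a refined double-counting over $S=(\pi_{1}|_V)(V)\times(\pi_{23}|_V)(V)$ to pin the dimension of $V$ at exactly $k$.
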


\begin{proof}
Let $n'=\max\{k,n\}$, $l=n'-n$, and
\[
B\eqdef
\{(\vt{x},\vt{x}')\in (\field{q}^k\times\field{q}^n)\times\field{q}^l:
\ew(\vt{x})<h_0 \vee \vt{x}_1=\vt{0} \vee (\vt{x}_2,\vt{x}')=\vt{0}\}.
\]
It is clear that $aB\subseteq B$ for $a\in\field{q}$
and that
\begin{eqnarray*}[rcl]
|B|
&< &(q-1)(q^{n}-q^{k'-1})q^l+q^{n+l}+q^k-1\\
&= &q^{n'+1}+q^{k-1}-1.
\end{eqnarray*}
Then by Lemma~\ref{le:Max.Sep},
a maximal $B$-separable subspace $V$
of $\field{q}^k\times\field{q}^n\times\field{q}^l$ satisfies
\[
\bigcup_{\vt{v}\in V} (\vt{v}+B) =
\field{q}^k\times\field{q}^n\times\field{q}^l,
\]
so that
\begin{equation}
|V| \ge \frac{q^{k+n+l}}{|B|} > \frac{q^{k+n'}}{q^{n'+2}} = q^{k-2},
\label{eq:encoder.ed.lb.cover.1}
\end{equation}
that is, the dimension of $V$ is at least $k-1$.

For $(\vt{x}_1,\vt{x}_2,\vt{x}')\in
\field{q}^k\times\field{q}^n\times\field{q}^l$,
we define the canonical projections
\begin{eqnarray*}[rcl]
\pi_1(\vt{x}_1,\vt{x}_2,\vt{x}')&\eqdef&\vt{x}_1,\\
\pi_2(\vt{x}_1,\vt{x}_2,\vt{x}')&\eqdef&\vt{x}_2,\\
\pi_{23}(\vt{x}_1,\vt{x}_2,\vt{x}')&\eqdef&(\vt{x}_2,\vt{x}').
\end{eqnarray*}
Since the kernel of $\pi_1$ (resp., $\pi_{23}$) is a subset of $B$,
which intersects $V$ only at the zero vector,
the kernel of $\pi_1|_V$ (the restriction of $\pi_1$ to $V$)
(resp., $\pi_{23}|_V$) contains only the zero vector,
hence $\pi_1|_V$ (resp., $\pi_{23}|_V$) is injective,
and therefore the dimension of $V$ is at most $k$.

Let $S = (\pi_{1}|_V)(V)\times(\pi_{23}|_V)(V)$.
It is clear that $|S|\ge q^{2k-2}$ and that each
$(\vt{x}_1,\vt{x}_2,\vt{x}')\in S\setminus V$ is covered by
$(\pi_{1}|_V)^{-1}(\vt{x}_1) + B$ and
$(\pi_{23}|_V)^{-1}(\vt{x}_2,\vt{x}') + B$.
Then the bound \eqref{eq:encoder.ed.lb.cover.1} can be improved by
\[
|V| \ge \frac{q^{k+n+l}+|S|}{|B|+1}
> \frac{q^{k+n'}+q^{2k-2}}{q^{n'+1}+q^{k-1}} = q^{k-1},
\]
hence the dimension of $V$ is exactly $k$,
and therefore $\pi_1$ is an isomorphism.
If $k\ge n$, then $\pi_{23}$ is also an isomorphism.
Let $f$ be the composition $\pi_2 (\pi_1|_V)^{-1}$
from $\field{q}^k$ to $\field{q}^n$.
We conclude that $f$ is a full-rank linear encoder
of entropy distance not less than $h_0$.
The proof is complete.
\end{proof}

It is easy to see that the upper bound \eqref{eq:encoder.ed.ub}
is bounded above by $n+\log_q k+1$ and that the lower bound
\eqref{eq:encoder.ed.lb} is bounded below by
\[
\log_q\left(\frac{(q-1)^2q^{n-1}}{k(n+1)}\right)
> n-\log_q k-\log_q(n+1)-1.
\]
Then the gap between the two bounds is of order $\log_q (k^2n)$,
which is asymptotically negligible relative to $n$
if the rate $k/n$ is bounded.
Another fact to be noted is that the kernel and image of
a linear encoder achieving the lower bound \eqref{eq:encoder.ed.lb}
also achieve the lower bound \eqref{eq:Gilbert} asymptotically.

\begin{example}
Let $q=2$, $k=3$, and $n=7$.
By \eqref{eq:encoder.ed.ub} we have
\[
E_2(3,7) \le \ef_{2,7}(3) = \log_2 35.
\]
Since
\[
s\eqdef
\binom{3}{3}\binom{7}{7}
+\binom{3}{1}\binom{7}{7}+\binom{3}{2}\binom{7}{7}
+\binom{3}{3}\binom{7}{1}+\binom{3}{3}\binom{7}{6}
=21<124=2^7-2^2
\]
and
\[
s
+\binom{3}{3}\binom{7}{2}+\binom{3}{3}\binom{7}{5}
+\binom{3}{1}\binom{7}{1}+\binom{3}{1}\binom{7}{6}
+\binom{3}{2}\binom{7}{1}+\binom{3}{2}\binom{7}{6}
=147>124,
\]
it follows from \eqref{eq:encoder.ed.lb} that
\[
E_2(3,7) \ge \log_2 \binom{3}{3}\binom{7}{2} = \log_2 21.
\]
For an example achieving this lower bound,
consider the generator matrix of a $[7,3]$ simplex code
(cf. Example~\ref{ex:simpcode}).
Its entropy distance is $\log_2(\binom{3}{3}\binom{7}{4})=\log_2 35$.
\end{example}

Constructing linear encoders achieving the lower bound
\eqref{eq:encoder.ed.lb} is a difficult problem.
From the results in \cite{Yang.M1}, it follows that
an arbitrary linear encoder of a linear code with large entropy
distance concatenated with a low-density generator-matrix encoder
(with the column weight being the logarithm of dimension of output
vector space) can achieve \eqref{eq:encoder.ed.lb} asymptotically
(in a similar sense to \eqref{eq:AsymptoticSense}).

\section{Conclusion}
\label{sec:Conclusion}

In this paper, we proposed a new distance called entropy distance
for a linear code or a linear encoder.
The basic properties of entropy distance were investigated.
Several bounds on the entropy distance were derived.
In particular, we obtained the tight lower and upper bounds on
the largest entropy distance of a full-rank linear encoder
(Theorems~\ref{th:encoder.ed.ub} and \ref{th:encoder.ed.lb}).
Some concrete examples of linear codes and encoders
with large entropy distance were also provided.

As a mathematical problem, entropy distance brings many interesting
issues, some of which are not easier than their counterparts in
Hamming distance, e.g., determining the tight lower and upper bounds
on the largest size of a linear code given the length and entropy
distance of the code
(cf. Theorems~\ref{th:Gilbert} and \ref{th:Hamming}).
On the other hand, the significance of entropy distance for coding
applications, which remains for future study,
is still far from being understood.

\appendix

\section{Lemmas}

\begin{lemma}[cf. {\cite[p.~284]{Cover199100}}]
\label{le:Binom.Ineq}
Let $q\ge 2$, $n\ge 1$, and $0\le k\le n$.
Then
$\sum_{i \in I} \binom{n}{i}(q-1)^i \le q^{n\he_q(k/n)}$
with equality if and only if $k=(q-1)n/q$, where
\[
I\eqdef \left\{0\le i\le n:
\left[\frac{k}{(q-1)(n-k)}\right]^{k-i}\le 1\right\}
= \begin{cases}
\{0, 1, \ldots, k\} &if $k < (q-1)n/q$,\\
\{0, 1, \ldots, n\} &if $k = (q-1)n/q$,\\
\{k, k+1, \ldots, n\} &if $k > (q-1)n/q$.
\end{cases}
\]
\end{lemma}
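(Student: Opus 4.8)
The plan is to prove the bound by a Chernoff-style exponential tilting, which is the self-contained algebraic form of the probabilistic argument underlying \cite{Cover199100}. Writing $x\eqdef k/n$, I would first record the closed form of the target: expanding the definition of $\he_q$ and exponentiating gives
\[
q^{n\he_q(k/n)}=(q-1)^{k}\,\frac{n^{n}}{k^{k}(n-k)^{n-k}}
\]
(with the usual convention $0^{0}=1$ at the endpoints). Thus the whole statement reduces to estimating $\sum_{i\in I}\binom{n}{i}(q-1)^{i}$ against this explicit quantity.

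The key device is the parameter $t\eqdef k/[(q-1)(n-k)]$, which is exactly the base appearing in the definition of $I$. By construction $I=\{0\le i\le n:t^{k-i}\le 1\}=\{i:t^{i-k}\ge 1\}$, and checking whether $t<1$, $t=1$, or $t>1$ (equivalently $k<(q-1)n/q$, $k=(q-1)n/q$, or $k>(q-1)n/q$, since $t<1\iff qk<(q-1)n$) immediately reproduces the three explicit forms of $I$ given in the lemma. Assuming for the moment that $0<k<n$ so that $t\in(0,\infty)$, I would bound each term using $t^{i-k}\ge 1$ on $I$ and then enlarge the index set to all of $\{0,\dots,n\}$, which is legitimate because every summand is nonnegative:
\[
\sum_{i\in I}\binom{n}{i}(q-1)^{i}
\le \sum_{i\in I}\binom{n}{i}(q-1)^{i}t^{i-k}
\le t^{-k}\sum_{i=0}^{n}\binom{n}{i}\bigl[(q-1)t\bigr]^{i}
= t^{-k}\bigl[1+(q-1)t\bigr]^{n}.
\]
Substituting $t=k/[(q-1)(n-k)]$ gives $(q-1)t=k/(n-k)$ and $1+(q-1)t=n/(n-k)$, so a short simplification collapses the right-hand side to $(q-1)^{k}n^{n}/[k^{k}(n-k)^{n-k}]$, i.e.\ to $q^{n\he_q(k/n)}$. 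This establishes the inequality.

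For the equality case I would track both inequalities in the display. The second is tight iff every term with $i\notin I$ vanishes, i.e.\ iff $I=\{0,\dots,n\}$, which for $0<k<n$ forces $t=1$; the first is then automatically tight since $t^{i-k}\equiv 1$. Conversely, if equality holds throughout, the first inequality (a sum of nonnegative terms $\binom{n}{i}(q-1)^{i}(t^{i-k}-1)$) forces $t^{i-k}=1$ for all $i\in I$, and the second forces $I$ to be the full set; as $I$ then has at least two elements this again gives $t=1$. Hence in the nondegenerate range equality holds precisely when $t=1$, that is $k=(q-1)n/q$, as claimed.

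The argument is short, and I expect no real obstacle in the inequality itself: all the content is in the choice of the tilt $t$ and the binomial-theorem collapse. The one place demanding care is the interaction of the equality analysis with the degenerate endpoints $k\in\{0,n\}$, where $t$ is $0$ or undefined and $I$ is a singleton; these must be dispatched by hand before invoking the tilting identity. A direct check shows that at these endpoints the single surviving term already equals $q^{n\he_q(k/n)}$, so they too yield equality — a harmless degeneracy that I would flag explicitly, the stated characterization $k=(q-1)n/q$ being the intended nondegenerate description.
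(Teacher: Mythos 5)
Your proof is correct and, up to normalization, identical to the paper's: with your tilt $t=k/[(q-1)(n-k)]$, multiplying your chain $\sum_{i\in I}\binom{n}{i}(q-1)^i\le t^{-k}[1+(q-1)t]^n$ through by $q^{-n\he_q(k/n)}$ recovers exactly the paper's argument, which restricts the binomial identity $\sum_{i=0}^n\binom{n}{i}(k/n)^i(1-k/n)^{n-i}=1$ to the index set $I$ and lower-bounds each weight by its value at $i=k$, the weight ratio being precisely $t^{i-k}$. One point in your favor: your explicit check of the endpoints $k\in\{0,n\}$ --- where equality holds even though $k\ne(q-1)n/q$, because the dropped terms vanish --- is more careful than the paper's proof, which asserts equality if and only if $I=\{0,1,\ldots,n\}$ and silently overlooks this degeneracy.
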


\begin{proof}
Using the binomial formula
$\sum_{i=0}^n \binom{n}{i} x^i (1-x)^{n-i} = 1$
with $x=k/n$, we get
\begin{eqnarray*}[rcl]
1 &\ge &\sum_{i\in I} \binom{n}{i}(q-1)^i
\left[\frac{k}{(q-1)n}\right]^i \left(1-\frac{k}{n}\right)^{n-i}\\
&\ge &\sum_{i\in I} \binom{n}{i}(q-1)^i
\left[\frac{k}{(q-1)n}\right]^k \left(1-\frac{k}{n}\right)^{n-k}\\
&\ge &q^{-n\he_q(k/n)} \sum_{i\in I} \binom{n}{i}(q-1)^i
\end{eqnarray*}
and therefore $\sum_{i \in I} \binom{n}{i}(q-1)^i \le q^{n\he_q(k/n)}$
with equality if and only if $I=\{0,1,\ldots,n\}$.
\end{proof}

\begin{lemma}
\label{le:Max.Sep}
Let $B$ be a subset of $\field{q}^n$ such that
$aB\subseteq B$ for $a\in\field{q}$.
A subset $S$ of $\field{q}^n$ is said to be $B$-separable if
$S \cap (\vt{s}+B) = \vt{s}$ for each $\vt{s}\in S$.
A $B$-separable subspace $V$ of $\field{q}^n$ is said to be maximal if
any larger subspace containing $V$ is not $B$-separable.
Then a maximal $B$-separable subspace $V$
satisfies $\bigcup_{\vt{v}\in V} (\vt{v}+B) = \field{q}^n$.
\end{lemma}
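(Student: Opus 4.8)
The plan is to prove the statement by contradiction, exploiting the maximality of $V$ together with the closure property $aB \subseteq B$. Suppose $V$ is a maximal $B$-separable subspace but that $\bigcup_{\vt{v}\in V}(\vt{v}+B) \ne \field{q}^n$. Then there exists some $\vt{w}\in\field{q}^n$ lying in no translate $\vt{v}+B$; equivalently, $\vt{w}\notin \vt{v}+B$, i.e.\ $\vt{w}-\vt{v}\notin B$, for every $\vt{v}\in V$. The natural move is to adjoin $\vt{w}$ to $V$ by forming the larger subspace $V' \eqdef V + \field{q}\vt{w} = \{\vt{v}+a\vt{w}: \vt{v}\in V,\ a\in\field{q}\}$. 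Since $\vt{w}\notin V$ (because $\vt{0}\in B$ forces $\vt{w}\ne\vt{v}$ for all $\vt{v}\in V$, as otherwise $\vt{w}=\vt{v}\in\vt{v}+B$), this $V'$ strictly contains $V$, so by maximality $V'$ cannot be $B$-separable.

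The key step is then to show that $V'$ \emph{is} in fact $B$-separable, contradicting the previous line. Concretely I would verify that for each $\vt{u}\in V'$ one has $V'\cap(\vt{u}+B)=\{\vt{u}\}$. Writing an arbitrary element of $V'$ as $\vt{v}+a\vt{w}$, I would first reduce to checking separability around the origin: since $V'$ is a subspace, $V'\cap(\vt{u}+B)=\vt{u}$ for all $\vt{u}$ is equivalent to the single condition $V'\cap B = \{\vt{0}\}$ together with the $B$-separability of $V$ itself. So it suffices to show that the only element of $V'$ lying in $B$ is $\vt{0}$. Take $\vt{v}+a\vt{w}\in B$ with $\vt{v}\in V$, $a\in\field{q}$. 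If $a=0$ then $\vt{v}\in B\cap V$, and $B$-separability of $V$ (applied at $\vt{s}=\vt{0}$) forces $\vt{v}=\vt{0}$. If $a\ne 0$, then using $aB\subseteq B$ — or more precisely that $a^{-1}B\subseteq B$ for the nonzero scalar $a^{-1}$ — I can scale: from $\vt{v}+a\vt{w}\in B$ we get $a^{-1}\vt{v}+\vt{w}\in a^{-1}B\subseteq B$, so $\vt{w}\in (-a^{-1}\vt{v})+B = \vt{v}'+B$ for the vector $\vt{v}'\eqdef -a^{-1}\vt{v}\in V$. This exhibits $\vt{w}$ inside the translate $\vt{v}'+B$, contradicting the choice of $\vt{w}$.

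I expect the main obstacle to be handling the scalar $a$ cleanly and making sure the closure hypothesis is invoked in the correct direction. The hypothesis $aB\subseteq B$ is stated for all $a\in\field{q}$, including $a=0$; for the nonzero case I need the scaling by $a^{-1}$, which is legitimate precisely because $a^{-1}\in\field{q}$ and the closure holds for every field element, so $a^{-1}B\subseteq B$ as required. A small subtlety worth checking is the binary field $q=2$, where the only nonzero scalar is $1$ and $aB\subseteq B$ gives no extra leverage beyond $B=B$; but there the argument still goes through since $a\in\{0,1\}$ and the $a=1$ case collapses directly to $\vt{v}+\vt{w}\in B$, i.e.\ $\vt{w}\in(-\vt{v})+B=\vt{v}+B$, again contradicting the choice of $\vt{w}$. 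Once separability of $V'$ is established, the contradiction with maximality is immediate, completing the proof.
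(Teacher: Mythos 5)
Your proposal is correct and is essentially the paper's own argument in contrapositive form: both reduce $B$-separability of a subspace to the single condition $V'\cap B=\{\vt{0}\}$, adjoin a vector to form $V'=\{a\vt{x}+\vt{v}: a\in\field{q},\vt{v}\in V\}$, and use the closure $aB\subseteq B$ (applied to $a^{-1}$) to rewrite a nonzero element $\vt{v}+a\vt{x}$ of $V'\cap B$ as exhibiting $\vt{x}$ inside a translate $\vt{v}'+B$ with $\vt{v}'\in V$. The paper argues directly that every $\vt{x}\notin V$ is covered, while you assume an uncovered $\vt{w}$ and contradict maximality; the algebraic core is identical, so there is nothing to fix.
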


\begin{proof}
First note that for a vector space $V$, the $B$-separable condition
is reduced to $V\cap B=\{\vt{0}\}$.
We suppose $V\ne\field{q}^n$ and choose any $\vt{x}\notin V$.
Since $V$ is maximal, the subspace
$V'\eqdef\{a\vt{x}+\vt{v}: a\in\field{q}, \vt{v}\in V\}$
is not $B$-separable, so that $V'\cap B$ contains a nonzero vector
$\vt{x}'=a\vt{x}+\vt{v}'$ for some $a\in\field{q}\setminus\{0\}$
and $\vt{v}'\in V$, and hence $\vt{x}=a^{-1}\vt{x}'-a^{-1}\vt{v}'$.
The proof is complete by noting that
$-a^{-1}\vt{v}'\in V$ and $a^{-1}\vt{x}'\in B$.
\end{proof}

\section{The Proofs of Results in Section~\ref{sec:M&D}}
\label{app:M&D.proof}

\begin{proofof}{Proposition~\ref{pr:WDofRLC}}
For any $v \in \nzfield{q^n}$, we define the linear transformation
$f_v: \field{q^n} \to \field{q^n}$ given by $x \mapsto v x$, which
is also a linear transformation of $\field{q}^n$ onto $\field{q}^n$.
Let $g$ be an arbitrary injective linear transformation from
$\field{q}^k$ to $\field{q}^n$.
An $[n,k]$ linear code $C_v$ is defined to be the image
$f_v(g(\field{q}^k))$. Let us compute the average weight
distribution of $C_v$ over all $v \in \nzfield{q^n}$ for nonzero weight.
\begin{eqnarray}[rcl]
\frac{1}{|\nzfield{q^n}|} \sum_{v\in\nzfield{q^n}} \wtd_i(C_v)
&= &\frac{1}{|\nzfield{q^n}|} \sum_{v\in\nzfield{q^n}}
\sum_{\vt{y}\in\field{q}^n:\wt(\vt{y})=i}
\sum_{\vt{x}\in\field{q}^k\setminus\{\vt{0}\}}
1\{f_v(g(\vt{x})) = \vt{y}\}\nonumber\\
&= &\sum_{\vt{x}\in\field{q}^k\setminus\{\vt{0}\}}
\sum_{\vt{y}\in\field{q}^n:\wt(\vt{y})=i}
\frac{1}{|\nzfield{q^n}|} \sum_{v\in\nzfield{q^n}}
1\{vg(\vt{x}) = \vt{y}\}\nonumber\\
&= &(q^{n}-1)^{-1}(q^k-1)\binom{n}{i}(q-1)^i.\label{eq:WDofRLC.p1}
\end{eqnarray}
It is then easy to show that there is a linear code $C_v$ such that
\eqref{eq:WDofRLC} holds. If it were false, then for every
$v\in\nzfield{q^n}$ there would exist $i\ne 0$ such that
\[
\wtd_i(C_v) \ge n q^{-(n-k)} \binom{n}{i}(q-1)^i,
\]
so that there exists at least one $j$ such that more than $(q^n-1)/n$
linear codes of $\{C_v: v\in\nzfield{q^n}\}$ satisfy
\[
\wtd_j(C_v) \ge n q^{-(n-k)} \binom{n}{j}(q-1)^j,
\]
and therefore the average weight distribution of
$\{C_v: v\in\nzfield{q^n}\}$ for weight $j$ should be no less than
\[
q^{-(n-k)} \binom{n}{j}(q-1)^j,
\]
which is absurd by \eqref{eq:WDofRLC.p1}. The proof is complete.
\end{proofof}

\begin{proofof}{Proposition~\ref{pr:UPacking}}
For $f\in\mmgroup(\field{q}^n)$, we define a family of sets
$S_{f,\vt{c}} \eqdef \vt{c} + f(S)$ for $\vt{c} \in C$.
Because these sets are homogeneous, it suffices to focus on one set,
for example, $S_{f,\vt{0}}$.
We define the function $\Phi_f: S \to \{0, 1\}$ by
\[
\Phi_f(\vt{s}) \eqdef
1\{\text{there exists $\vt{c}\in C\setminus\{\vt{0}\}$
such that $f(\vt{s})=\vt{c}+f(\vt{s}')$
for some $\vt{s}'\in S$}\}.
\]
Then the number of elements in
$S_{f,\vt{0}}$ that are overlapped with another $S_{f,\vt{c}}$
for some $\vt{c}\ne\vt{0}$ is $\sum_{\vt{s} \in S} \Phi_f(\vt{s})$.

Note that $\Phi_f(\vt{s})$ can be bounded above by
\[
U_f(\vt{s}) \eqdef \sum_{\vt{c}\in C\setminus\{\vt{0}\}}
\sum_{\vt{s}'\in S} 1\{f(\vt{s})=\vt{c}+f(\vt{s}')\}.
\]
Then the average $\Phi(\vt{s})$ of $\Phi_f(\vt{s})$ over
all $f\in\mmgroup(\field{q}^n)$ is bounded by
\[
\frac{1}{|\mmgroup(\field{q}^n)|} \sum_{f\in\mmgroup(\field{q}^n)}
U_f(\vt{s})
= \sum_{\vt{s}'\in S} U_{\vt{s}, \vt{s}'},
\]
where
\[
U_{\vt{s}, \vt{s}'} \eqdef \frac{1}{|\mmgroup(\field{q}^n)|}
\sum_{\vt{c}\in C\setminus\{\vt{0}\}}
\sum_{f\in\mmgroup(\field{q}^n)} 1\{f(\vt{s}-\vt{s}')=\vt{c}\}.
\]
It is easy to show that
\[
U_{\vt{s}, \vt{s}'} = \begin{cases}
0 &if $\vt{s}=\vt{s}'$,\\
\displaystyle\frac{\wtd_{\wt(\vt{s}-\vt{s}')}(C)}%
 {\binom{n}{\wt(\vt{s}-\vt{s}')} (q-1)^{\wt(\vt{s}-\vt{s}')}}
 &otherwise.
\end{cases}
\]
From \eqref{eq:WDofRLC} it follows that
$U_{\vt{s}, \vt{s}'} < nq^{-(n-k)}$ for $\vt{s}\ne\vt{s}'$,
so
\begin{equation}
\Phi(\vt{s}) < nq^{-(n-k)}(|S|-1).\label{eq:UPacking.p1}
\end{equation}

If $|S|<q^{n-k}/(2n)$, we have
\[
\Phi(\vt{0}) < \frac{1}{2}
\]
and
\[
\sum_{\vt{s}\in S\setminus\{\vt{0}\}} \Phi(\vt{s})
< nq^{-(n-k)}(|S|-1)^2.
\]
By a similar argument to Proposition~\ref{pr:WDofRLC}, we conclude
that there exists $g\in\mmgroup(\field{q}^n)$ such that
\[
\Phi_g(\vt{0}) = 0
\]
and
\[
\sum_{\vt{s}\in S\setminus\{\vt{0}\}} \Phi_g(\vt{s})
< 2nq^{-(n-k)}(|S|-1)^2.
\]
If we choose $B = \Phi_g^{-1}(0)$, then it is clear that
conditions~\ref{enu:UPacking.1}--\ref{enu:UPacking.3} hold.

If $S$ is invariant under any monomial map, then
\begin{eqnarray*}[rcl]
\Phi(\vt{s})
&= &\frac{1}{|\mmgroup(\field{q}^n)|} \sum_{f\in\mmgroup(\field{q}^n)}
 \Phi_{\id_{\field{q}^n}}(f(\vt{s}))\\
&= &\frac{\sum_{\vt{s}':\wt(\vt{s}')=\wt(\vt{s})}
 \Phi_{\id_{\field{q}^n}}(\vt{s}')}%
 {\binom{n}{\wt(\vt{s})} (q-1)^{\wt(\vt{s})}}
\end{eqnarray*}
and hence, conditions~\ref{enu:UPacking.1s}--\ref{enu:UPacking.3s}
follow from \eqref{eq:UPacking.p1} with
$B=\Phi_{\id_{\field{q}^n}}^{-1}(0)$. The proof is complete.
\end{proofof}

\begin{proofof}{Proposition~\ref{pr:h.property}}
\ref{enu:h.range} The inequality can be rewritten as
\[
1 \le \binom{n}{i}(q-1)^i < q^n.
\]
The first inequality is clearly true, and the second comes from
$q^n = \sum_{i=0}^n \binom{n}{i}(q-1)^i$.

\ref{enu:h.max} The statement is proved by observing that
\[
\ef_{q,n}(i+1) - \ef_{q,n}(i) = \Delta(i)
\eqdef \log_q \frac{(q-1)(n-i)}{(i+1)}
\]
and
\[
\Delta(i)\gtreqqless 0
\qquad\text{for $i\lesseqqgtr \frac{(q-1)n-1}{q}$}.
\]

\ref{enu:h.entropy} See Lemma~\ref{le:Binom.Ineq}
 and \cite[Theorem~12.1.3]{Cover199100}.
\end{proofof}

\begin{proofof}{Proposition~\ref{pr:ED.property}}
We only prove \ref{enu:ED.property.TI}.
The proofs of other statements are left to the reader.

\ref{enu:ED.property.TI} We first prove the inequality in the case of
$\wt(\vt{x}+\vt{y})=0$ or $n$.
If $\wt(\vt{x}+\vt{y})=0$, then $\vt{x}=-\vt{y}$,
so that
\[
q^{\ew(\vt{x}+\vt{y})} = 1
\le \frac{\max\{\wt(\vt{x}), n-\wt(\vt{x})\}}{n} q^{\ew(\vt{x})}
\le \beta(\wt(\vt{x}),\wt(\vt{y})) q^{\ew(\vt{x})+\ew(\vt{y})}.
\]
If $\wt(\vt{x}+\vt{y})=n$, then $\wt(\vt{x})+\wt(\vt{y})\ge n$, so that
\[
q^{\ew(\vt{x}+\vt{y})} = (q-1)^n
\le \beta(\wt(\vt{x}),\wt(\vt{y})) q^{\ew(\vt{x})+\ew(\vt{y})}.
\]

Now we shall prove the inequality by induction on $n$.
The case of $n=1$ has already been covered by the above special cases.
If $n\ge 2$, we can assume that
$1\le \wt(\vt{x}+\vt{y})\le n-1$.
With no loss of generality, we assume that
$\wt(x_1+y_1)=0$ and $\wt(x_n+y_n)=1$,
and we define
\begin{eqnarray*}[rclqrcl]
x' &= &(x_2, \ldots, x_n),&
x''&= &(x_1, \ldots, x_{n-1}),\\
y' &= &(y_2, \ldots, y_n),&
y''&= &(y_1, \ldots, y_{n-1}).
\end{eqnarray*}
Supposing the inequality is true for $n-1$, we have
\begin{eqnarray*}[rcl]
q^{\ew(\vt{x}+\vt{y})}
&= &\left(\binom{n-1}{\wt(\vt{x}+\vt{y})}
 + \binom{n-1}{\wt(\vt{x}+\vt{y})-1}\right)
 (q-1)^{\wt(\vt{x}+\vt{y})}\\
&= &q^{\ew(\vt{x'}+\vt{y'})} + (q-1)q^{\ew(\vt{x''}+\vt{y''})}\\
&\le &q^{\ew(\vt{x'})+\ew(\vt{y'})}
 + (q-1)q^{\ew(\vt{x''})+\ew(\vt{y''})}\\
&= &\binom{n-1}{\wt(\vt{x})-\wt(x_1)}
 \binom{n-1}{\wt(\vt{y})-\wt(y_1)}
 (q-1)^{\wt(\vt{x})+\wt(\vt{y})-\wt(x_1)-\wt(y_1)}\\
& &+ (q-1)\binom{n-1}{\wt(\vt{x})-\wt(x_n)}
 \binom{n-1}{\wt(\vt{y})-\wt(y_n)}
 (q-1)^{\wt(\vt{x})+\wt(\vt{y})-\wt(x_n)-\wt(y_n)}\\
&\le &\max\left\{\frac{(n-\wt(\vt{x}))(n-\wt(\vt{y}))}{n^2},
 \frac{\wt(\vt{x})\wt(\vt{y})}{n^2(q-1)^2}\right\}
 q^{\ew(\vt{x})+\ew(\vt{y})}\\
& &+ (q-1) \max\left\{\frac{(n-\wt(\vt{x}))\wt(\vt{y})}{n^2(q-1)},
 \frac{\wt(\vt{x})(n-\wt(\vt{y}))}{n^2(q-1)},
 \frac{\wt(\vt{x})\wt(\vt{y})}{n^2(q-1)^2}1\{q\ge 3\}\right\}
 q^{\ew(\vt{x})+\ew(\vt{y})}\\
&\le &\beta(\wt{\vt(x)},\wt{\vt(y)}) q^{\ew(\vt{x})+\ew(\vt{y})},
\end{eqnarray*}
as desired.
\end{proofof}

\bibliographystyle{IEEEtran}
\bibliography{IEEEabrv,mined}

\end{document}